\theoremstyle{plain}
\newtheorem{assumption}{\protect\assumptionname}
\theoremstyle{plain}
\newtheorem{prop}{\protect\propositionname}
\newtheorem{lem}{\protect\lemmaname}
\providecommand{\U}[1]{\protect\rule{.1in}{.1in}}
\providecommand{\assumptionname}{Assumption}
\providecommand{\propositionname}{Proposition}
\providecommand{\assumptionname}{Assumption}
\providecommand{\lemmaname}{Lemma}
\begin{document}
\title{Innovation through intra and inter-regional interaction in economic geography\thanks{We are thankful to Steven Bond-Smith, Sofia B. S. D. Castro, Jo\~{a}o Correia da Silva, Pascal Mossay, Pietro Peretto, Anna Rubinchik and Jorge Saraiva for very useful comments and suggestions. We would also like to thank the participants at the Fourth International Workshop  "Market Studies and Spatial Economics", Universit\'{e} Libre de Bruxelles, ECARES,  at the 8th Euro-African Conference on Finance and Economics / Mediterranean Workshop in Economic Theory, Faculty of Economics, University of Porto, and at the 7th Geography of Innovation Conference, University of Manchester. Funding Information:
Japan Society for the Promotion of Science
Grant/Award Number: 
21K04299; 
Funda\c{c}\~{a}o para a Ci\^{e}ncia e Tecnologia
        UIDB/04105/2020, UIDB/00731/2020 and PTDC/EGE-ECO/30080/2017. Part of this research was developed while Jos\'{e} M. Gaspar was a researcher at the Research
		Centre in Management and Economics, Cat\'{o}lica Porto Business School, Universidade Cat\'{o}lica Portuguesa, through the grant CEECIND/02741/2017.}}

\author{Jos\'{e} M. Gaspar\thanks{School of Economics and Management and CEF.UP, University of Porto. Email: jgaspar@fep.up.pt.} \ and Minoru Osawa\thanks{Institute of Economic Research, Kyoto University. Email: osawa.minoru.4z@kyoto-u.ac.jp.}}

\date{\vspace{-5ex}}
\maketitle

\begin{abstract}
We develop a two-region economic geography model with vertical innovations that
improve the quality of manufactured varieties produced in each region. The chance of innovation  depends on the \emph{related variety}, i.e. the importance of interaction between researchers within the same region rather than across different regions. As economic integration increases from a low level, a higher related variety is associated with more agglomerated spatial configurations. However, if the interaction with foreign scientists is relatively more important for innovation, economic activities may (completely) re-disperse after an initial phase of agglomeration due to the increase in the relative importance of a higher chance of innovation in the less industrialized region. This non-monotonic relationship between economic
integration and spatial imbalances may exhibit very diverse qualitative
properties, not yet described in the literature.
\end{abstract}
\bigskip{}

\noindent \textbf{Keywords: }Innovation; inter-regional spillovers; economic geography; re-dispersion;

\noindent \textbf{JEL codes: }R10, R12, R23.

\section{Introduction}

Geographical economics emphasizes the role of endogenous forces in shaping lasting and sizable economic agglomerations in the modern economy. However, in its aim to explain the spatial distribution of economic activities,  there has been a narrow focus on pecuniary externalities through trade linkages.%
\footnote{Reviews
on the literature of geographical economics, or ``new economic geography'' models, are
provided in the monographs by \citet{Fujita-Krugman-Venables-Book1999}, \citet{BaldwinForslidMartinOttavianoRobertNicoud+2003}, and \citet{fujitathisse2013}, or in the papers by \citet{krugman2011new}, \citet{storper2011regions},  \citet{behrens2011tempora} and \citet{gaspar2018prospective}.} 
Other possible sources
of agglomeration economies, such as knowledge externalities and technological
spillovers, are left out \citep{gaspar2018prospective}. 
But in order to understand the processes of (de)-industrialization, it is crucial to develop theories that explore the interaction among multiple spatial linkages. 
We aim to fill this gap by explaining how the intra-regional and inter-regional interaction between researchers impacts knowledge creation and affects the spatial distribution of agents. 
Moreover, we study how the weight of such interactions interplays with economic integration to understand the evolution of the space economy as trade barriers decrease.

We combine the typical pecuniary externalities in geographical economics \citep{Krugman-JPE1991,Fujita-Krugman-Venables-Book1999,BaldwinForslidMartinOttavianoRobertNicoud+2003}
with the spatial diffusion of knowledge spawned from intra-regional
and inter-regional interactions  to infer
about the circular causality between migration and knowledge flows. In the present work, production of knowledge affects the firms' capacity to innovate, which in turn allows the production of higher quality manufactured varieties in a region. The chance of successful innovations depends on the spatial distribution of mobile agents in the economy. Therefore, it is assumed that regional knowledge levels transfer imperfectly between regions, depending on the \textit{related variety} \citep{Frenken-etal-RS2007}, i.e., the relative importance of interaction between agents within
the same region rather than between different regions -- which depends on several factors such as cognitive proximity, cultural factors, diversity of skills and abilities, among others.   We assume further that the increasing complexity of each variety is offset by the available regional quality levels (cf. Section 3.3) generated from knowledge spillovers. Our modeling strategy is such that indirect utility differentials, which govern the migration of mobile agents between regions, are determined solely by trade linkages and by the spatial dimension of regional interaction  (cf. Section 3.4). We thus avoid the explicit use of dynamics for the innovation process.  This allows us for great analytical tractability and to focus on spatial outcomes as a result of pecuniary factors and the \emph{economic geography} of knowledge spillovers \citep{BS2022}.

We show a very diversified set of qualitative predictions regarding the spatial distribution of mobile agents, and, especially, a very rich gallery of possibilities depending on how related variety affects the agglomeration process as economies become more integrated.  
Specifically, if inter-regional interaction is relatively more important for the success of firms' innovation (related variety is low)\footnote{But not too low; cf. Section 4.2.2.}, then an increase in economic integration from a very low level initially fosters agglomeration in a single region. However, above a certain threshold, more integration leads to more symmetric spatial outcomes, because firms find it worthwhile to relocate to the peripheral regions in order to benefit from higher expected profits due to the sizeable pool of agents in the core, which increases the chance of innovation in the deindustrialized region.  Therefore, when related variety is low (but not too low),\footnote{For exceedingly low values of related variety, the symmetric dispersion equilibrium is the unique stable equilibrium in the entire range of economic integration.} our model accounts for a (complete) re-dispersion of economic activities after an initial phase of agglomeration. That is, we are able to uncover a bell-shaped relation between economic integration and spatial development. In this case, knowledge spillovers constitute a local dispersion force that becomes relatively stronger as economic integration brings about the withering of agglomeration forces due to increasing returns to scale in manufacturing. 
But the process of (de)-industrialization is far from trivial; depending on the specific level of related variety, the process of the bell-shaped relationship between economic integration and spatial imbalances occurs with very different qualitative properties, not yet described in the literature.\footnote{To the best of our knowledge.} In fact, we show that increases in related variety are linked to more pronounced agglomerations in the industrialization process, and to more sudden (discontinuous) jumps towards dispersed outcomes, particularly for intermediate levels of economic integration. 

By contrast, when intra-regional interaction is relatively more important, knowledge spillovers become more localized and generate an additional agglomeration force. Re-dispersion becomes altogether impossible because within-region interaction  is too important for innovation to make any deviation to a deindustrialized region worthwhile. 

While the existence of a re-dispersion phase hinges solely on the spatial dispersive or agglomerative nature of technological spillovers, the type of transition between distinct spatial outcomes due to increasing economic integration may differ across various functional forms governing the likelihood of a successful innovation. To better illustrate these points, we draw a brief comparison between the benchmark case, where regional interaction is additive, and a scenario whereby interaction is multiplicative.

The rest of the paper is organized as follows. Section 2 discusses some related literature. Section 3 introduces the spatial economic model and describes its short-run general equilibrium. Section 4 deals with the existence and stability of long-run equilibria. Section 5 studies the relationship between economic integration and spatial outcomes. In Section 6, we provide some comparative statics and a more general form for the firms' success of innovation and a discussion on the robustness of our results.  Finally, Section 7 is left for discussion and concluding remarks.

\section{Literature Review}

The formation of significant economic clusters is intricately linked to agglomeration economies, as the concentration of economic agents in urban centers yields a range of positive effects, both pecuniary and non-pecuniary \citep{Duranton-Puga-HB2004,Duranton-Puga-HB2015}. The spatial economy can thus be seen as the result of trade-offs between such scale economies and the transportation costs incurred by the movement of goods, people, and information \citep{Proost-Thisse-JEL2019}.

However, most theories of endogenous agglmeration do not address how knowledge externalities operate between locations because they deliberately focus on trade linkages as \textit{the} mode of inter-location interaction to investigate the role of pecuniary externalities, as well as to ensure tractability \citep{Fujita-Mori-PRS2005}.  Such a narrow focus enables
researchers to design a microfounded model based on the firms' perspective
using modern tools of economic theory. Nonetheless, it is true that further
development in geographical economics requires modeling the creation and transfer of
knowledge to infer how it affects the
location of economic activities \citep{fujitathisse2013}. In particular, the role of K-linkages has become increasingly
relevant in the economic geography literature. Building upon pioneering works such as
\citet{Berliant-Fujita-IER2008,Berliant-Fujita-IJET2009,Berliant-Fujita-RSUE2012},
one should hope that a new comprehensive economic geography theory fully integrates
the linkage effects among consumers and producers and K-linkages in
space. According to \citet{Fujita-RSUE2007}, geography is an essential
feature of knowledge creation and diffusion. For instance, people
residing in the same region interact more frequently and thus contribute
to develop the same, regional set of cultural ideas. However, while
each region tends to develop its unique culture, the economy as a
whole evolves according to the synergy that results from the interaction
across different regions (i.e., different cultures). That is, according
to \citet{Duranton-Puga-AER2001}, knowledge creation and location
are inter-dependent. \citet{Berliant-Fujita-RSUE2012} developed a
model of spatial knowledge interactions and showed that higher cultural
diversity, albeit hindering communication, promotes the productivity
of knowledge creation. This corroborates the empirical findings of
\citet{Ottaviano-Peri-JEG2006,Ottaviano-Peri-DP2008}. \citet{Ottaviano-Prarolo-JRS2009}
show how improvements in the communication between different cultures
fosters the creation of multicultural cities in which cultural diversity
promotes productivity. This happens because better communication allows
different communities to interact and benefit from productive externalities
without risking losing their cultural identities. \citet{Berliant-Fujita-SEJ2011}
take a first step towards using a micro-founded R\&D structure to
infer about its effects on economic growth. They find that long-run
growth is positively related to the effectiveness of interaction among
workers as well as the effectiveness in the transmission of public
knowledge.

Therefore, combining the typical pecuniary externalities in geographical economics
with the spatial diffusion of knowledge spawned from intra-regional
and inter-regional interactions alike is important if we want to infer
about an eventual circular causality between migration and the circulation
of knowledge. In other words, geographical economics may shed light on the importance
of knowledge exchanged between different regions through trade networks
compared to ``internally'' generated knowledge.

Reinforcing the importance of heterogeneity in knowledge, it is  crucial
to discern about the \emph{relatedness} of variety. This relatedness
measures the cognitive proximity and distance between sectors that
allows for a higher intensity of knowledge spillovers. According to
\citet{Frenken-etal-RS2007}, a higher \emph{related variety} increases
the inter-sectoral knowledge spillovers between sectors that are technologically
related. This potentially adds a new dimension to the role of heterogeneity
and location in the creation and diffusion of knowledge. \citet{Tavassoli-Carbonara-SBE2014}
have tested the role of knowledge intensity and variety using regional
data for Sweden and found evidence that different types of cognitive
proximity have an important weight. This confirms the relevance of
the spatial determinants of innovation and knowledge creation.

In this paper, knowledge creation and diffusion amounts to vertical innovations in the manufacturing sector, which in turn draws from the literature on endogenous growth. Particularly in Schumpeterian growth theory, innovations
that affect the quality of produced goods or a firm's cost efficiency are usually driven
by stochastic processes, because the production of knowledge involves
some sort of uncertainty. Therefore, we may think of quality as a
proxy  for a given firm's stock of knowledge. 
In  the growth models developed e.g. by \citet{Aghion-Howitt-ECTA1990,Aghion-Howitt-Book1998},
\citet{Young-JPE1998}, \citet{Peretto-JEG1998}, \citet{Howitt-JPE1999},
or more recently \citet{Dinopoulous-Segerstrom-JDE2010}, innovations
occur with a probability that depends on factors such as the amount
of the firm's research effort, the common pool of public knowledge
available to all firms, and the individual firm's quality level. Introducing geography
and worker mobility in these frameworks allows the innovation success to also depend on the magnitude of regional interaction through the
exchange of ideas between workers and producers alike among regions.
If cognitive proximity is more important for innovation, then related variety is high and knowledge spillovers become more localized, thus promoting agglomeration \citep{baldwin2004agglomeration}. If interaction with foreign agents is more important, related variety is low and is  likely to induce the dispersion of economic activities. 

We follow the literature of \emph{Quantitative Spatial Economics} \citep{redding2017quantitative,BEHRENS2021103348,kleinman2023linear} and purposefully abstract from modeling growth or the explicit use of dynamics in the innovation sector, thus ensuring analytical tractability. However, knowledge transfers \emph{imperfectly} between regions, in accordance with the findings in the seminal work of \citet{audretsch1996r}. Nonetheless, our modeling of the innovation sector renders the model scale-neutral. As such, its spatial outcomes are driven by the spatial nature of knowledge spillovers (and by its interplay with pecuniary externalities), rather than by implicit assumptions about scale returns in the innovation sector.\footnote{For a comprehensive discussion on how these  implicit assumptions regarding returns to scale generate mistaken conclusions in geographical economics, we refer the  reader to \cite{BS2022}.}

\section{The model}

The following is an analytically solvable footloose entrepreneur model \citep{BaldwinForslidMartinOttavianoRobertNicoud+2003}. 
The economy is comprised of two regions
indexed by $i=\{1,2\}$, two kinds of labour, two productive sectors
and one R\&D sector. There is a unit mass of (skilled) inter-regionally
mobile agents (which we dub scientists henceforth) and a mass $l\equiv\lambda>0$
of (unskilled) immobile workers (just workers, for short) which are
assumed to be evenly distributed across both regions, i.e., $l_{i}=\frac{\lambda}{2}$
$(i=1,2)$. The amount of scientists in region $1$ is given by $z_{1}\equiv z\in\left[0,1\right]$
and fully describes the spatial distribution of agents in the economy.

\subsection{Demand}

The utility function of a consumer located in region $i$ is given
by: 
\begin{equation}
u_{i}=\mu\ln\left(\frac{M_{i}}{\mu}\right)+B_{i},\ \ \mu>0\label{eq:utility}
\end{equation}
where $B_{i}$ is the num\'{e}raire good produced under perfect competition
and constant returns to scale. This good is produced one-for-one using
$L$ workers and its price is set to unity as is the wage paid to
workers. The quality-augmented CES composite $M_{i}$ is given by: 
\begin{equation}
M_{i}=\left[\sum_{j=1}^{2}\int_{s\in S}\left(\delta_{ij}^{m(s)}d_{ij}(m,s)\right)^{\tfrac{\sigma-1}{\sigma}}ds\right]^{\tfrac{\sigma}{\sigma-1}},\label{eq:ces utility}
\end{equation}
where $d_{ij}(m,s)$ is the demand for manufactures in region $i$ produced
in region $j$ for a given variety $s\in S$ of quality $m \in\left\{ 1,....,k\right\}$, $k$ being the highest quality rung achieved for any variety, $S$ is the mass of
varieties in region $i$ and $\sigma$ is the elasticity of substitution
between any two varieties. The parameter $\delta>1$ indexes the
step size of quality improvements in region $i$ after a successful
innovation and $k$ is the leading quality grade for any given variety
$s$. Since $\delta^{m(s)}$ is increasing in $m$, the utility in (\ref{eq:ces utility})
reflects the fact that consumers have a preference for higher quality
(\citealp{Dinopoulos-Segerstrom-DP2006}). However, love for variety implies
that varieties, once adjusted for quality, are perfect substitutes.
This means that each consumer purchases only the good with the lowest
quality adjusted price, $p_{i}(s)/\delta_{i}^{m}$. If any two goods
have the same quality adjusted price, consumers will only buy the
highest quality good (Dinopoulos and Segerstorm \citeyear{Dinopoulos-Segerstrom-DP2006,Dinopoulous-Segerstrom-JDE2010};
\citealp{davis2012private}). The firm responsible for each quality improvement for a variety $s$ retains a monopoly right to produce that variety at the highest quality possible, i.e., $k$. Therefore, if the quality rungs $m=1,...,k$ have been reached, the $k\textit{th}$ innovator is the sole source of the good of variety $s$ with the quality level $\delta^k$ \citep{Barro-Sala-i-Martin-Book2004}. In what follows, we can already predict that the short-run general equilibrium will be comprised solely of firms that produce the highest quality possible of their variety  $s \equiv s_k\in S$.

Since individual incomes depend on the distribution of labour activities,
we have $y_{i}=1$ for the workers, and $y_{i}=w_{i}$, which is the
the compensation paid to the scientists that engage in research. Therefore,
the regional income is given by: 
\[
Y_{i}=\frac{\lambda}{2}+w_{i}z_{i}.
\]

\noindent The individual budget constraint is given by $B_{i}+P_{i}M_{i}=y_{i}+\bar{B}$, where $\bar{B}$ is the initial
endowment of the num\'{e}raire and $P_i$ is the regional price index. Since total expenditure on manufactures must equal $P_{i}$
times the quantity of composite $M_{i}$, agents maximize (\ref{eq:utility}) subject to the following budget constraint:
\[
B_{i}+\sum_{j=1}^{2}\int_{s\in S}p_{ij}(s)d_{ij}(s)ds=y_{i}+\bar{B},
\]
which yields the following optimal individual demands: 
\begin{equation}
d_{ij}(s)=\mu\frac{a_{i}(s)p_{ij}(s)^{-\sigma}}{P_{i}^{1-\sigma}},\ B_{i}=y_{i}+\bar{B}-\mu,\ M_{i}=\mu P_{i}^{-1},\label{eq:optimal individual demand}
\end{equation}
where  $a_{i}(s)=\delta_{i}^{m(s)(\sigma-1)}$ is just an alternative
measure of a variety $s$'s quality in region $i$ and $P_{i}$ is the quality adjusted price index
given by: 
\begin{equation}
P_{i}=\left[\sum_{j=1}^{2}\int_{s\in S}a_{ij}(s)p_{ij}(s)^{1-\sigma}ds\right]^{\tfrac{1}{1-\sigma}}.\label{eq:quality price index}
\end{equation}

\noindent We assume that $\bar{B}>\mu$ in order to assure that both
types of goods are consumed. From (\ref{eq:utility}) and (\ref{eq:optimal individual demand}),
we obtain the indirect utility:
\begin{equation}
v_{i}=y_{i}-\mu\ln P_{i}-\mu+\bar{B}.\label{eq:indirect utility}
\end{equation}

\subsection{Manufacturing firms}

For each firm, there is a variable input requirement of $\beta$ workers.
A manufacturing firm in region $i$ thus faces the following cost:
\begin{equation}
C_{i}(q_{i}(s))=\beta q_{i}(s),\label{eq:total cost function}
\end{equation}
where $q_{i}$ is total production by a firm in region $i$.

Trade of manufactures between regions is burdened by transportation
costs of the iceberg type. Let the iceberg costs $\tau_{ij}\in(1,+\infty)$
denote the number of units that must be shipped at region $i$ for
each unit that is delivered at region $j$. We have $\tau_{ij}=\tau_{j}\in\left(1,\infty\right)$
for $i\neq j$ and $\tau_{ij}=1$ otherwise. The quantity produced
by a firm in region $i$ is thus given by: 
\[
q_{i}(s)=\sum_{j=1}^{2}\tau_{ij}d_{ij}\left(1+z_{j}\right).
\]

\noindent The profit of a manufacturing firm with the leading grade
$k$ of variety $s$ in region $i$ is given by: 
\begin{align}
\tilde{\pi}_{i}\equiv\pi_{i}(s\equiv s_{k})= & \sum_{j=1}^{2}p_{ij}(s_{k})d_{ij}(s_{k})\left(\frac{\lambda}{2}+z_{j}\right)-\beta q_{i}(s_{k})\nonumber \\
= & \sum_{j=1}^{2}\left(p_{ij}(s_{k})-\tau_{ij}\beta\right)d_{ij}(s_{k})\left(\frac{\lambda}{2}+z_{j}\right),\label{eq:profit of firm i}
\end{align}
where $s_k$ denotes the highest quality of a given variety $s$. Given (\ref{eq:profit of firm i}) and the optimal individual demand
in (\ref{eq:optimal individual demand}), the firm's profit maximizing
price is the usual mark-up over marginal cost: 
\begin{equation}
p_{ij}=\dfrac{\sigma}{\sigma-1}\tau_{ij}\beta,\label{eq:optimal price}
\end{equation}
which does not depend on the quality of the firm's variety $s$. Under
(\ref{eq:quality price index}), the regional quality adjusted price
index in (\ref{eq:quality price index}) becomes: 
\begin{equation}
P_{i}=\dfrac{\beta\sigma}{\sigma-1}\left(\sum_{j=1}^{2}\phi_{ij}A_{j}n_{j}\right)^{\tfrac{1}{1-\sigma}},\label{eq:quality price index 2}
\end{equation}
where $\phi_{ij}\equiv\tau_{ij}^{1-\sigma}\in(0,1)$ is the \emph{freeness
of trade} and $n_{i}$ is the number of manufacturing firms (and manufactured
varieties), and: 
\[
A_{i}=\sum_{s=1}^{n_{i}}a_{i}(s),
\]
is the aggregate quality index for region $i$. The latter also constitutes
a measure of the aggregate regional knowledge level. 

Consider now that the aggregate knowledge level in one region is a
public good such that it becomes readily available to the other region.
Then $A_{i}=\max\left\{ A_{1},A_{2}\right\} $ so that both regions
are able to reach the highest aggregate knowledge level available.
This leads to the following assumption.
\begin{assumption}
$A_{1}=A_{2}=A$.
\end{assumption}
The rationale behind Assumption 1 is basically to eliminate first-nature advantages between regions (i.e., regional asymmetries). This not only makes the analysis much simpler but also allows us
to focus on conveying the main message of the paper, which is the
effects of regional interaction on the spatial distribution of economic
activities. Under Assumption 1, the regional price index becomes simply:
\begin{equation}
P_{i}=\dfrac{\beta\sigma}{\sigma-1}\left(A\sum_{j=1}^{2}\phi_{ij}n_{j}\right)^{\tfrac{1}{1-\sigma}}.\label{eq:price index 10}
\end{equation}
The higher the aggregate quality, the lower the cost of living in
region $i$.

\subsection{R\&D sector}

We assume that there is free entry in the R\&D sector for each variety
$s$. In order to innovate, a firm decides \emph{ex-ante} to employ
$\alpha$ scientists in the R\&D sector. In doing so, a firm producing
variety $s$ at region $i$ reaches a leading quality grade $k$ with
instantaneous probability: 
\begin{equation}
\Phi_{i}(s)=\min\left\{ \dfrac{bz_{i}+(1-b)z_{j}}{a_{i}(s)}\gamma A,1\right\} ,\label{eq:prob of succesful innovation-1}
\end{equation}
where $b\in(0,1)$ is the \emph{weight of intra-regional interaction
in the chance of innovation }and defines the importance of the exchange
of ideas between researchers alike among regions, and $\gamma>0$
is an efficiency parameter. We are assuming that the lowest quality
grade possible is $\underline{a}>1$ so that $a_{i}(s)>1$, for any
$s$. It is also reasonable to assume that the firm's research success is
greater the higher the level of aggregate knowledge in region $i$, $A_{i}=A$,
available to all firms alike. Finally, we assume that the innovation
rate is decreasing in the complexity of each product, as measured
by its quality level $a_{i}(s)$ \citep{Li-AER2003,Dinopoulous-Segerstrom-JDE2010}.

It is worthwhile explaining how the underlying specification governing
the innovation process depends on the magnitude of the interactions
(or lack of them) between different ``sets of ideas''. Analogously
to the interpretation of \citet{Berliant-Fujita-RSUE2012}, we implicitly
assume that each region holds its own set of ideas (or culture). Therefore,
production of knowledge (which amounts to innovation), depends on
the amount of ``within region'' interaction among scientists, but
also on the interaction with scientists hailing from a different region.
As such, we can say that $b$ is sort of a measure of the \emph{related
variety} of knowledge \citep{Frenken-etal-RS2007}. A higher related
variety means that innovation benefits more from a regional common
pool of ideas, that is, from the interaction between researchers and
workers in the same region. Specifically, when $b\in(1/2,1)$ -- related variety is \emph{high} --, we have that $\Phi_i(s)$ is increasing in $z$ so that innovation in region $1$ is more likely the more researchers live there (technological spillovers are more localized). Conversely, when $b\in(0,\frac{1}{2})$, we say that related variety is \emph{low}, and innovation benefits from more researchers living in the other region. In this case, congestion dampens innovation and generates an additional dispersion force in the model.  

We may look at the term $\gamma A\left[bz_{i}+(1-b)z_{j}\right]$ as the spatially-weighted average of global knowledge observed by each firm. We thus introduce a spatial mechanism of localized spillovers such that knowledge transfers imperfectly between researchers in different locations \citep{BS2022}.   In Section 6.2, we briefly discuss a more general case for the success of innovation that preserves this spatial mechanism of technological spillovers.  Nonetheless, we advance the argument that the additive (linear) specification in (\ref{eq:prob of succesful innovation-1}) is suitable as a benchmark case, because it provides detailed and interesting insights, without sacrificing analytical tractability.



When a manufacturing firm in region $i$, that produces variety $s$,
innovates, it has the probability $\Phi_{i}(s)$ of reaching the leading
quality grade $k$ of its variety. A firm that successfully reaches the grade quality
$k$ becomes the quality leader and charges the monopolistic competitive
price, collecting profits $\tilde{\pi}_{i}\equiv\pi_{i}(s_{k}).$
Lower quality products are considered obsolete. Firms who are unable
to attain the leading quality grade face \emph{creative destruction}
and are priced out of the market \citep{Aghion-Howitt-ECTA1990}.

\subsection{Short-run equilibrium}

Given the research intensity $\alpha$, each firm faces the following
expected profit:
\[
E\left[\pi_{i}(s)\right]=\Phi_{i}(s)\tilde{\pi}_{i}-\left[1-\Phi_{i}\left(s\right)\right]0-\alpha w_{i},
\]
where $w_{i}$ is the nominal wage paid to scientists in region $i.$ 

Labour market clearing implies that the number of varieties (and hence
firms) is given by $S=z_{i}/\alpha$, from where the price index in
(\ref{eq:price index 10}) becomes:

\begin{equation}
P_{i}=\dfrac{\beta\sigma}{\sigma-1}\left(\frac{A}{\alpha}\sum_{j=1}^{2}\phi_{ij}z_{j}\right)^{\tfrac{1}{1-\sigma}}.\label{eq:price index final}
\end{equation}

\noindent Given free-entry in the R\&D sector, in equilibrium expected
profits are driven down to zero, which yields the following condition:
\[
w_{i}=\frac{\Phi_{i}(s)\tilde{\pi}_{i}}{\alpha}.
\]
Using (\ref{eq:optimal individual demand}), (\ref{eq:optimal price}),
this becomes:
\begin{equation}
w_{i}=\dfrac{\mu a_{i}}{\alpha\sigma}\Phi_{i}\sum_{j=1}^{2}\left(\dfrac{p_{ij}}{P_{i}}\right)^{1-\sigma}\left(\frac{\lambda}{2}+z_{j}\right).\label{eq:first wage}
\end{equation}
Replacing (\ref{eq:optimal price}), (\ref{eq:prob of succesful innovation-1}) and (\ref{eq:price index final})
in (\ref{eq:first wage}) yields:
\begin{equation}
w_{i}=\frac{\mu\gamma}{\sigma}\left[bz_{i}+(1-b)z_{j}\right]\left(\dfrac{\frac{\lambda}{2}+z_{i}}{z_{i}+\phi z_{j}}+\phi\dfrac{\frac{\lambda}{2}+z_{j}}{\phi z_{i}+z_{j}}\right).\label{eq:wage equation}
\end{equation}

\noindent Finally, using (\ref{eq:wage equation}) and (\ref{eq:indirect utility}),
we get the indirect utility in region $i$:
\begin{equation}
v_{i}=\frac{\mu\gamma}{\sigma}\left[bz_{i}+(1-b)z_{j}\right]\left(\dfrac{\frac{\lambda}{2}+z_{i}}{z_{i}+\phi z_{j}}+\phi\dfrac{\frac{\lambda}{2}+z_{j}}{\phi z_{i}+z_{j}}\right)+\dfrac{\mu}{\sigma-1}\ln\left[z_{i}+\phi z_{j}\right]+\eta,\label{eq:indirect utility final}
\end{equation}
where $\eta\equiv-\mu\left(\frac{\beta\sigma}{\sigma-1}\right)+\frac{\mu}{\sigma-1}\ln\left(\frac{A}{\alpha}\right)-\mu+\bar{B}$
is a constant. 

Notice how our assumptions on the innovation rate in (\ref{eq:prob of succesful innovation-1}) imply that regional knowledge, $a(i)$ and $A$, does not affect indirect utility. This means that we can conveniently avoid modelling the dynamics of innovation. As a result, the spatial outcomes of our model are solely determined by pecuniary factors and by the spatial dimension of knowledge creation and diffusion, captured by the term $bz_i+(1-b)z_j$, as proposed by \citet{BS2022}.

\section{Long-run equilibria}

Scientists are free to migrate between regions. In doing so, they
choose the region that offers them the highest indirect utility. The
long-run spatial distribution thus depends on the utility differential:
\begin{equation}
\Delta v(z)=v_{1}(z)-v_{2}(z).\label{eq:utility differential}
\end{equation}
 We follow \cite{Castro_2021} in the characterization of equilibria and their stability. There are two kinds of long-run equilibria which should be dealt
with separately. 
\begin{enumerate}
\item \emph{Agglomeration }of all scientists in a single region $z^{*}=\left\{ 0,1\right\} $
is an equilibrium if and only if $\Delta v(1)\geq0,$ or, equivalently,
$\Delta v(0)\leq0$.
\item \emph{Dispersion} of scientists $z^{*}\in\left(0,1\right)$ is an
equilibrium if and only if $\Delta v(z^{*})=0$. If $z^{*}=\frac{1}{2}$
it corresponds to \emph{symmetric dispersion.} Otherwise, it is called \emph{asymmetric}.
\end{enumerate}
Equilibria are stable if, after a perturbation such that $z=z^{*}\pm\epsilon$,
with $\epsilon>0$ small enough, the utility differential $\Delta v(z)$
becomes such that agents go back to their place of origin, i.e., $z=z^{*}$.
A sufficient condition for stability of agglomeration is $\Delta v(1)>0$
(or $\Delta v(0)<0$). A sufficient condition for the stability of dispersion
is that $\Delta v^{\prime}(z^{*})<0$. When equilibria are \emph{regular} (resp. $\Delta v(1)\neq0$ and $\Delta v^{\prime}(z^{*})\neq0$), these conditions are also necessary.\footnote{In models that are well-behaved, the non-existence of
irregular long-run equilibria holds in a full measure subset of a suitably defined parameter
space.} 

\subsection{Existence and multiplicity}

Our first result regards the multiplicity of long-run equilibria. Given symmetry across regions, we focus on the case whereby region $1$ is either the same size or is larger than region $2$, i.e., $z\in\left[\frac{1}{2},1\right]$. 

Symmetric dispersion $z^*=\frac{1}{2}$ is called an \emph{invariant pattern}, because it is a long-run equilibrium for the entire parameter range \citep{aizawa2020break5}. Next, we have the following result regarding possible equilibria for $z\in\left(\frac{1}{2},1\right].$

\begin{prop}
There are at most two equilibria for $z\in\left(\frac{1}{2},1\right].$
\end{prop}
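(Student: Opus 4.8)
The plan is to use the region-swap symmetry to reduce the statement to a one-dimensional shape problem. Interchanging the two regions sends $z\mapsto 1-z$ and flips the sign of the utility differential in (\ref{eq:indirect utility final}), so $\Delta v(1-z)=-\Delta v(z)$ and in particular $\Delta v(1/2)=0$. It therefore suffices to study $\Delta v$ on $[1/2,1]$, where the equilibria lying in $(1/2,1]$ are exactly the zeros of $\Delta v$ in the open interval $(1/2,1)$, together with the agglomeration $z=1$ if and only if $\Delta v(1)\ge 0$. So the proposition is equivalent to: the number of zeros of $\Delta v$ in $(1/2,1)$, plus one when $\Delta v(1)\ge 0$, is at most two.

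The engine of the argument is the change of variable $u\equiv\dfrac{z+\phi(1-z)}{(1-z)+\phi z}$, the ratio of the two trade-weighted aggregators already appearing in (\ref{eq:price index final})--(\ref{eq:wage equation}); it maps $[1/2,1]$ monotonically and bijectively onto $[1,1/\phi]$, and the logarithmic term in (\ref{eq:indirect utility final}) is exactly $\tfrac{\mu}{\sigma-1}\ln u$. A routine (if lengthy) reduction of (\ref{eq:indirect utility final}) then collapses $\Delta v$ to
\[
\Delta v \;=\; g(u)\;:=\;P\Bigl(u-\tfrac1u\Bigr)+Q\,\frac{u-1}{u+1}+K\ln u,\qquad K\equiv\frac{\mu}{\sigma-1}>0,
\]
with $g(1)=0$ and $P,Q$ real constants explicit in the primitives. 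Differentiating and setting $v\equiv u+\tfrac1u\in[2,\phi+\tfrac1\phi]$ (monotone in $u$) gives $g'(u)=q(v)/(1+u)^2$ with the single quadratic $q(v)=Pv^{2}+(2P+K)v+2(Q+K)$. Hence $g$ has at most two critical points on $(1,1/\phi)$; combined with $g(1)=0$ and Rolle's theorem this already bounds the number of interior zeros of $g$ by two, but it leaves open the possibility of two interior zeros together with $g(1/\phi)>0$, i.e. three equilibria.

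Excluding that last possibility is the main obstacle, and it is where one must watch signs rather than just degrees. Two interior zeros plus $g(1/\phi)>0$ would require $g$ to be increasing, then decreasing, then increasing on $[1,1/\phi]$ (recall $g(1)=0$), hence $q$ to be positive near both ends of $[2,\phi+\tfrac1\phi]$ and negative in between; this forces $P>0$ and forces $q$ to have two roots inside that interval. But when $P>0$ the roots of $q$ sum to $-2-K/P<0$, so they cannot both be positive --- a contradiction. Consequently, when $g$ has two critical points its shape is necessarily decreasing--increasing--decreasing, and otherwise it is monotone, increasing--decreasing, or decreasing--increasing. In each of these admissible shapes a short sign check, using only $g(1)=0$ and the sign of $g(1/\phi)$, shows that the number of zeros of $g$ in $(1,1/\phi)$ plus one (if $g(1/\phi)\ge 0$) is at most two, which is the assertion. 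The bulk of the work is thus the algebraic reduction of $\Delta v$ to $g(u)$ together with the case bookkeeping at the end; the delicate point is the incompatibility of $P>0$ with two roots of $q$ in $[2,\phi+\tfrac1\phi]$, which is precisely what rules out a stable agglomeration coexisting with two asymmetric dispersion equilibria.
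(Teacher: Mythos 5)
Your proposal is correct, but it takes a genuinely different route from the paper's. The paper differentiates $\Delta v$ directly in $z$ and shows that $\frac{d\Delta v}{dz}$ has the sign of a fourth-degree polynomial in $z$, so $\Delta v$ has at most four turning points and at most five equilibria on $[0,1]$; antisymmetry about $z=\tfrac12$ then caps the count on $\left(\tfrac12,1\right]$ at two. You instead use the antisymmetry first, pass to the market-access ratio $u=\frac{z+\phi(1-z)}{(1-z)+\phi z}$, and collapse $\Delta v$ to $P\left(u-\tfrac1u\right)+Q\,\frac{u-1}{u+1}+K\ln u$; this identity does hold (it reproduces, for instance, $\mu\mathcal{S}$ at $z=1$, the agglomeration expression of Section 4.2.1), and the computation $(1+u)^{2}g'(u)=Pv^{2}+(2P+K)v+2(Q+K)$ with $v=u+\tfrac1u$ is also right, so $g$ has at most two critical points on $\left(1,\tfrac1\phi\right)$. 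What your route buys is an explicit treatment of the boundary case: since $z=1$ is an equilibrium whenever $\Delta v(1)\ge 0$, not only when $\Delta v(1)=0$, a bare count of zeros does not by itself exclude two asymmetric equilibria coexisting with agglomeration, and your sum-of-roots argument (two roots of $q$ in $\left[2,\phi+\tfrac1\phi\right]$ would force a positive sum, incompatible with $P>0$ and $K>0$, so the increasing--decreasing--increasing shape is impossible) disposes of exactly that configuration, which the paper's turning-point count handles only implicitly by treating equilibria as zeros of $\Delta v$. The cost is the heavier algebraic reduction, which you assert rather than derive; it is true, but a complete write-up should display $P$ and $Q$ in terms of the primitives (and perhaps rename them, since the paper already uses $P(z)$ for the quartic). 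In short, the paper's proof is shorter and works with one quartic on all of $[0,1]$, while yours yields finer shape information on the half-interval and closes the endpoint case explicitly.
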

\begin{proof}
See Appendix A.
\end{proof}

We can be more precise regarding the existence of dispersion equilibria with the following result.

\begin{prop}
A dispersion equilibrium $z\equiv z^{*}\in\left(\frac{1}{2},1\right]$
exists if $b\in\left(\max\left\{ 0,\tilde{b}\right\} ,\hat{b}\right)$,
where:{\small{}
\[
\tilde{b}\equiv\frac{\gamma(\sigma-1)(2z-1)\left[(z-1)z\left(\phi^{2}-1\right)+\phi^{2}\right]+\sigma\left[z(\phi-1)+1\right]\left[z(\phi-1)-\phi\right]\ln\left[\frac{z(\phi-1)+1}{z(1-\phi)+\phi)}\right]}{\gamma(\sigma-1)(2z-1)(\phi+1)\left[2(z-1)z(\phi-1)+\phi\right]},
\]
}and 
\[
\hat{b}\equiv\frac{1+\phi^{2}}{(1+\phi)^{2}}.
\]
\end{prop}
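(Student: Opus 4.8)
The plan is to reduce existence of a dispersion equilibrium at $z^{*}\in(\tfrac12,1]$ to the scalar equation $\Delta v(z^{*})=0$, exploit that $\Delta v(z)$ is affine in $b$, solve for the critical weight, and then delimit its admissible range.

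\textbf{Linearity in $b$ and the dispersion branch.} Substitute $z_{1}=z$, $z_{2}=1-z$ into the indirect utility \eqref{eq:indirect utility final} and form $\Delta v(z)=v_{1}(z)-v_{2}(z)$; the additive constant $\eta$ cancels. Write $u=z+\phi(1-z)$, $v=(1-z)+\phi z$ and let $W_{1},W_{2}$ denote the two ``market-access'' brackets $\frac{\lambda/2+z_{i}}{z_{i}+\phi z_{j}}+\phi\frac{\lambda/2+z_{j}}{\phi z_{i}+z_{j}}$ $(i=1,2)$. Since $bz_{i}+(1-b)z_{j}$ is affine in $b$ while $W_{1},W_{2}$, the logarithm and $\eta$ do not involve $b$, grouping powers of $b$ gives
\[
\Delta v(z)=\frac{\mu\gamma}{\sigma}\Big[(2z-1)(W_{1}+W_{2})\,b+(1-z)W_{1}-zW_{2}\Big]+\frac{\mu}{\sigma-1}\ln\frac{u}{v}.
\]
On $(\tfrac12,1]$ the coefficient $(2z-1)(W_{1}+W_{2})$ of $b$ is strictly positive (all the $W$-terms are positive), so $\Delta v(z)=\frac{\mu\gamma}{\sigma}(2z-1)(W_{1}+W_{2})\big(b-\tilde b(z)\big)$ with
\[
\tilde b(z)=\frac{zW_{2}-(1-z)W_{1}-\frac{\sigma}{\gamma(\sigma-1)}\ln\frac{u}{v}}{(2z-1)(W_{1}+W_{2})};
\]
hence $z$ is a dispersion equilibrium precisely when $b=\tilde b(z)$, and $\operatorname{sign}\Delta v(z)=\operatorname{sign}\big(b-\tilde b(z)\big)$ on $(\tfrac12,1)$.

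\textbf{Closed form for $\tilde b$.} Multiply numerator and denominator of $\tilde b$ by $\gamma(\sigma-1)\,uv$ and simplify using $u+v=1+\phi$, $u-v=(1-\phi)(2z-1)$, $uv=z(1-z)(1-\phi)^{2}+\phi$, together with the polynomial reductions $uv(W_{1}+W_{2})=(1+\phi)\big(2z(1-z)(1-\phi)+\phi\big)$ and $uv\big(zW_{2}-(1-z)W_{1}\big)=(2z-1)\big((z-1)z(\phi^{2}-1)+\phi^{2}\big)$, the latter obtained by pulling out $(2z-1)$ via $z^{3}-(1-z)^{3}=(2z-1)(z^{2}-z+1)$; finally rewrite the prefactor of the logarithm with $z(\phi-1)+1=v$ and $z(\phi-1)-\phi=-u$, i.e. $[z(\phi-1)+1][z(\phi-1)-\phi]=-uv$. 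This yields the displayed formula for $\tilde b$ (after the natural reciprocal-argument rearrangement of the logarithm).

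\textbf{The window for $b$.} First, the denominator of $\tilde b$ is strictly positive on $(\tfrac12,1]$: $2z-1>0$, $\phi+1>0$, and $2(z-1)z(\phi-1)+\phi=2z(1-z)(1-\phi)+\phi>0$ because $z(1-z)\ge0$ and $\phi>0$; so $\tilde b$ is finite. Next, $\tilde b(z)<\hat b$: since $\ln\frac{u}{v}>0$ on $(\tfrac12,1]$, the term $-\frac{\sigma}{\gamma(\sigma-1)}\ln\frac{u}{v}$ is negative, so deleting it only increases $\tilde b$, and it suffices to verify $\frac{(z-1)z(\phi^{2}-1)+\phi^{2}}{(1+\phi)\big(2z(1-z)(1-\phi)+\phi\big)}<\frac{1+\phi^{2}}{(1+\phi)^{2}}$; after cross-multiplication the immobile-worker contributions cancel and this reduces to $z(1-z)(1-\phi)^{3}+\phi(1-\phi)>0$, which is obvious. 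Hence $b<\hat b$ is necessary for any point of $(\tfrac12,1]$ to be a dispersion equilibrium. Finally, a dispersion equilibrium at $z$ is realized by an admissible related-variety parameter only when $\tilde b(z)>0$ — this can fail (for small $\phi$ the negative logarithmic term overwhelms the positive polynomial part), and the requirement is exactly the lower endpoint $\max\{0,\tilde b\}$. Collecting these facts gives the claimed window $b\in\big(\max\{0,\tilde b\},\hat b\big)$.

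\textbf{Main obstacle.} The substantive work is twofold. (i) The rational-function bookkeeping behind the closed form: establishing the two polynomial identities for $uv(W_{1}+W_{2})$ and $uv(zW_{2}-(1-z)W_{1})$ — where the factor $(2z-1)$ must be extracted cleanly (this is the role of the cubic factorization above) while tracking how the immobile-worker terms enter — and getting every sign right in the passage between $\ln\frac{u}{v}$, $\ln\frac{v}{u}$ and the identity $[z(\phi-1)+1][z(\phi-1)-\phi]=-uv$. (ii) The inequality $\tilde b(z)<\hat b$, which is transcendental on its face: the clean route is to use the sign of $\ln\frac{u}{v}$ to discard the logarithm entirely and collapse the inequality to the elementary polynomial one above; a cruder bound such as $\ln\frac{u}{v}<\frac{u}{v}-1$ would also close it but leaves a harder residual polynomial to sign.
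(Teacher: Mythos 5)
There is a genuine gap, and it is conceptual as well as computational. The proposition's existence claim is quantified over the immobile-worker mass: the paper's proof solves $\Delta v(z)=0$ for $\lambda$, obtaining $\lambda^{*}(z)=-2\bigl(b_{1}b_{2}+b_{3}\ln[\cdot]\bigr)/b_{4}$, and shows that $\lambda^{*}(z)>0$ exactly when $b\in\bigl(\max\{0,\tilde b\},\hat b\bigr)$, where $\hat b$ is the pole of $\lambda^{*}$ (the root of $b_{4}=0$) and $\tilde b$ is the unique zero of $\lambda^{*}$ in $b$ (using $\partial\lambda^{*}/\partial b>0$). Existence then means: for such $b$ there is an admissible $\lambda>0$ making $z$ a dispersion equilibrium. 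You instead fix $\lambda$ and solve $\Delta v(z)=0$ for $b$. Under your own reduction, a given $z\in(\tfrac12,1]$ is an equilibrium only at the knife-edge value $b=\tilde b(z)$, so no open window of $b$ values can follow from it; your closing paragraph simply asserts the window $\bigl(\max\{0,\tilde b\},\hat b\bigr)$ without an argument that, for every $b$ in it, some $z^{*}$ (or some $\lambda$) actually solves $\Delta v=0$. To close this along your route you would need an additional step — e.g.\ an intermediate-value argument in $z$ for fixed $b$ and $\lambda$, or the paper's step of letting $\lambda$ adjust — and neither is supplied.

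Moreover, the algebra you invoke to reach the displayed closed form is false for $\lambda>0$. Both brackets carry $\lambda$: for instance
\[
uv\,(W_{1}+W_{2})=(1+\phi)\Bigl[\tfrac{\lambda}{2}(1+\phi)+2z(1-z)(1-\phi)+\phi\Bigr],
\]
not $(1+\phi)\bigl(2z(1-z)(1-\phi)+\phi\bigr)$, and similarly $uv\bigl(zW_{2}-(1-z)W_{1}\bigr)$ retains $\lambda$-terms. Hence the critical $b$ you construct depends on $\lambda$ and does not coincide with the $\lambda$-free $\tilde b$ in the statement; your identities hold only at $\lambda=0$. This is not a coincidence: the displayed $\tilde b$ is precisely the value of $b$ at which $\lambda^{*}(z)=0$, i.e.\ the boundary of the region $\lambda^{*}(z)>0$, which is the role it plays in the paper's proof but not in yours. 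Your bound $\tilde b<\hat b$ via discarding the (positive) logarithmic contribution is a reasonable device, but it is applied to the wrong object, so the proof as written does not establish the proposition.
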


\begin{proof}
See Appendix A.
\end{proof}

\begin{figure}
     \centering
     \begin{subfigure}[b]{0.47\textwidth}
         \centering
         \includegraphics[width=\textwidth]{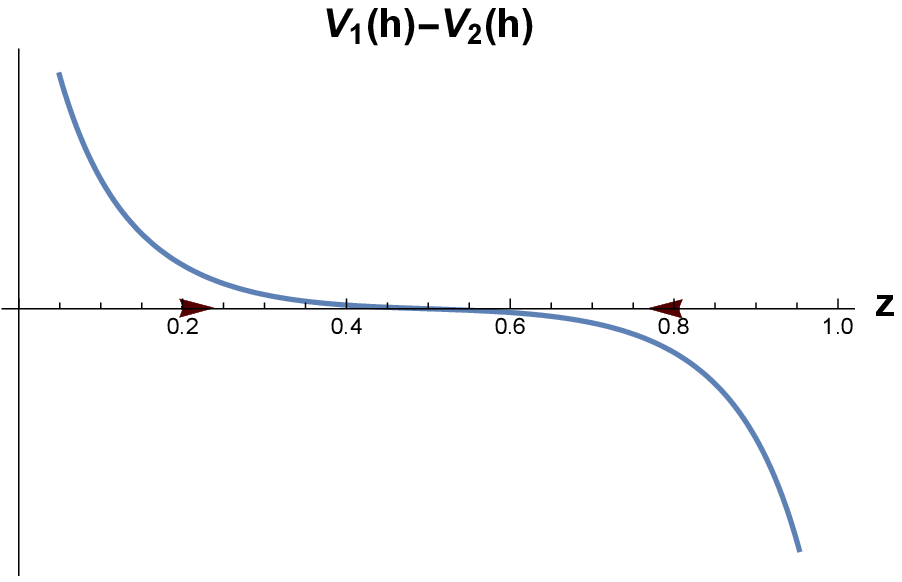}
         \caption{Stable symmetric dispersion: $\phi=0.1$.}
         \label{fig:scenariosa}
     \end{subfigure}
     \hfill
     \begin{subfigure}[b]{0.47\textwidth}
         \centering
         \includegraphics[width=\textwidth]{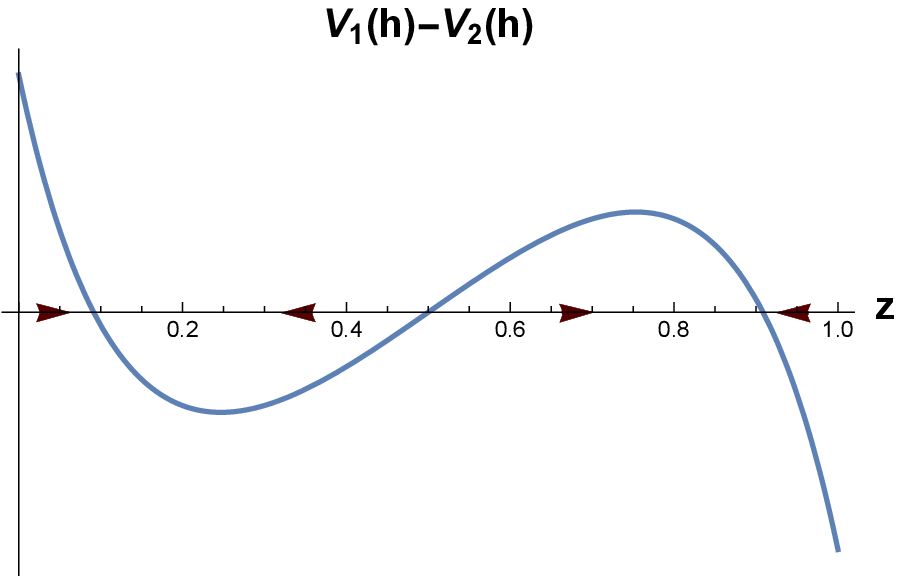}
         \caption{Stable asymmetric dispersion: $\phi=0.3$.}
         \label{fig:scenariosb}        
     \end{subfigure}
     \par\bigskip
          \begin{subfigure}[b]{0.47\textwidth}
         \centering
         \includegraphics[width=\textwidth]{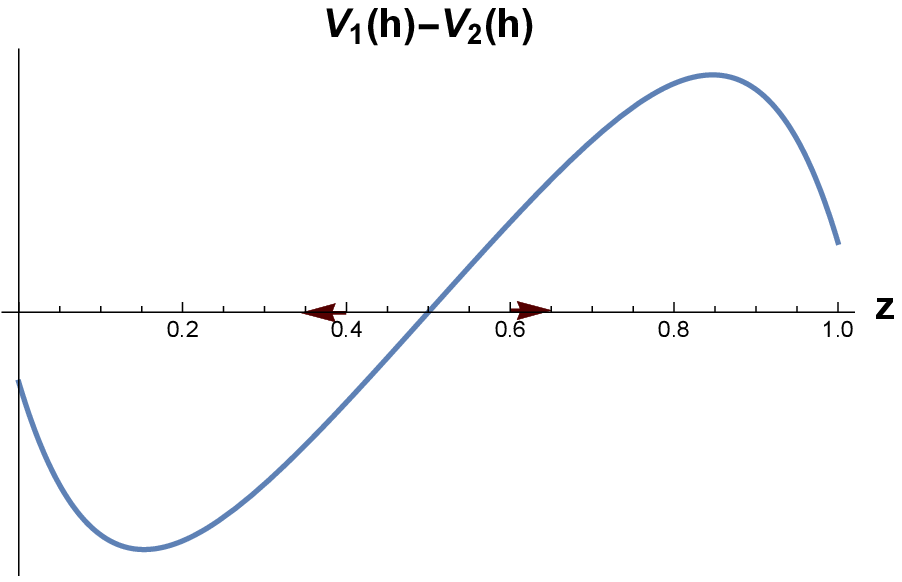}
         \caption{Stable agglomeration: $\phi=0.38$.}
         \label{fig:scenariosd}
     \end{subfigure}
     \hfill
\begin{subfigure}[b]{0.47\textwidth}
         \centering
         \includegraphics[width=\textwidth]{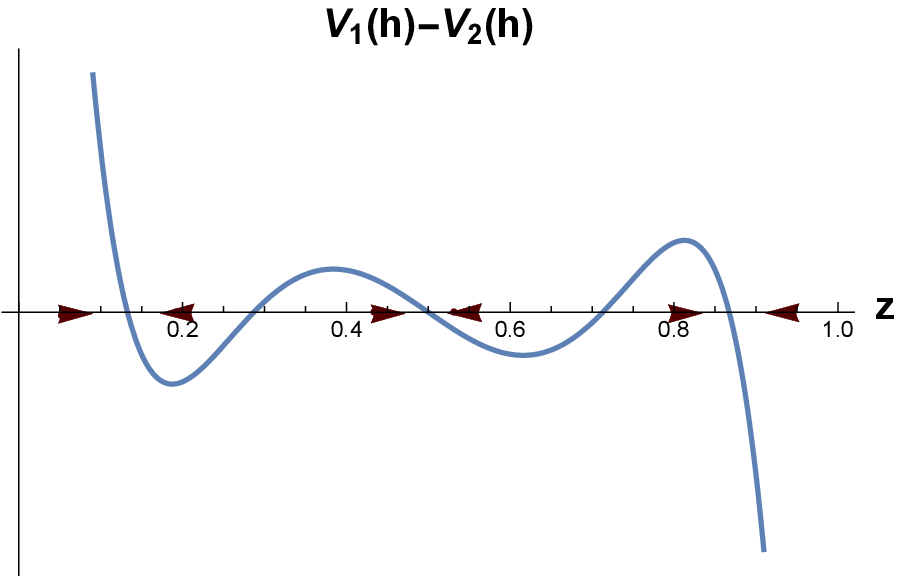}
         \caption{Stable asymmetric dispersion: $\phi=0.4$.}
         \label{fig:scenariosc}        
     \end{subfigure}
     \par\bigskip
     \begin{subfigure}[b]{0.47\textwidth}
         \centering
         \includegraphics[width=\textwidth]{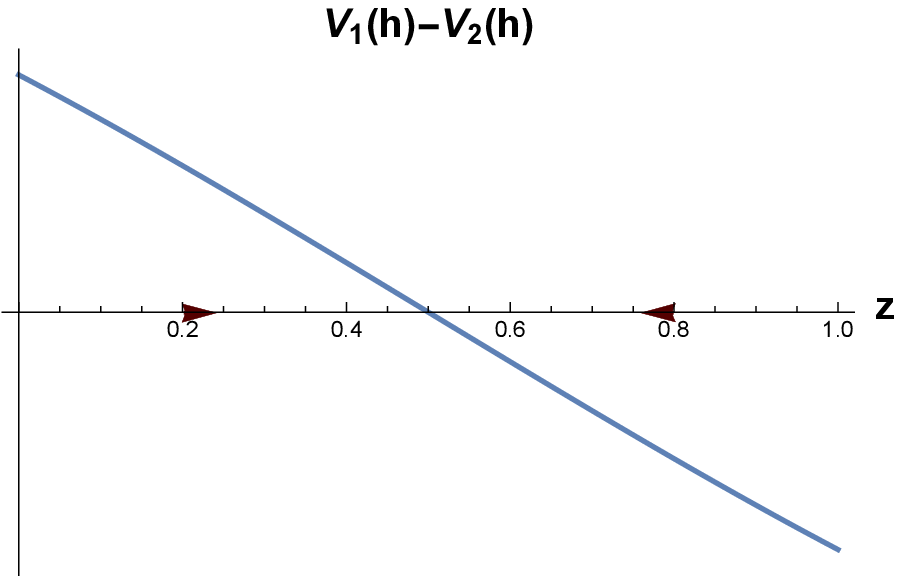}
         \caption{Stable symmetric dispersion: $\phi=0.8$.}
         \label{fig:scenariose}        
     \end{subfigure}     
\caption{Long-run equilibria and their stability as $\phi$ increases.}
\label{fig:Figure1}
\end{figure}

Figure~\ref{fig:Figure1} shows one scenario with five qualitatively different cases, with varying freeness of trade $\phi$, for  $b\in (0,\frac{1}{2})$, regarding existence of the model's long-run spatial distribution, which exhaust all mathematical possibilities for the parameter values $\left(\lambda,\gamma,\sigma,b\right)=\left(2,1,5,0.342\right)$. As a prelude to the forthcoming Section, Figure~\ref{fig:Figure1} also numerically depicts the local stability of each equilibrium, which is to be analysed analytically in greater detail in Section 3.2.

In Figure~\ref{fig:scenariosa}, only symmetric dispersion exists and is stable for a very small $\phi$. For a higher trade freeness we have one stable asymmetric dispersion for $z\in\left(\frac{1}{2},1\right]$ as portrayed in Figure~\ref{fig:scenariosb}. For a  greater $\phi$, Figure~\ref{fig:scenariosd} shows that the asymmetric dispersion equilibrium disappears and symmetric dispersion becomes unstable, whereas agglomeration becomes stable. For an even greater $\phi$, Figure~\ref{fig:scenariosc} illustrates an example of two long-run dispersion equilibria $z^*$ for $z\in\left(\frac{1}{2},1\right]$ and  symmetric dispersion $z=\frac{1}{2}$, whereby  we can observe that both symmetric dispersion and the more agglomerated dispersion equilibrium are locally stable, whereas the less agglomerated equilibrium is unstable. The economy re-disperses and agglomeration does not exist in this particular case.  However, as the trade freeness increases further,   symmetric dispersion remains stable and all other equilibria disappear. 

When related variety is low, $b\in (0,\frac{1}{2})$, the model accounts for a ``bubble-shaped'' relationship between economic integration and spatial imbalances  \citep{Pfluger-Suedekum-JUE2008}, whereby firms are initially dispersed, then start to agglomerate in a single region as the trade freeness increases, but then find it worthwhile to relocate to the peripheral regions in order to benefit from higher expected profits due to the sizeable pool of scientists in the core which increases the chance of innovation in the periphery. In other words, when related variety is low, knowledge spillovers constitute a local dispersion force whose strength becomes relatively higher as economic integration increases and leads to the vanishing of agglomeration forces.

The case of high related variety, $b\in (\frac{1}{2},1)$, is much less diversified and can be accounted for resorting to a subset of the pictures from Figure \ref{fig:Figure1}. The history as economic integration increases is as follows. For a very low trade freeness, symmetric dispersion is the only stable equilibrium as in Figure~\ref{fig:scenariosa}. For an intermediate value of $\phi$, one asymmetric dispersion equilibrium arises which is the only stable one and becomes more asymmetric as $\phi$ increases further. This is akin to the picture in Figure~\ref{fig:scenariosb}. Finally, the asymmetric dispersion equilibrium gives rise to stable full agglomeration in one single region once $\phi$ becomes very high. This is illustrated in Figure~\ref{fig:scenariosd}. In other words, when intra-regional interaction is relatively more important, knowledge spillovers are more localized, and thus constitute an additional agglomeration force. 

In the forthcoming Sections, we will analytically and numerically study in greater detail   the local stability of the spatial distributions and the qualitative change in the model's structure as economic integration increases.

\subsection{Stability}

\subsubsection{Agglomeration}

Regarding agglomeration, using (\ref{eq:indirect utility final})
in (\ref{eq:utility differential}), we have that it is stable if:
\[
\mathcal{S} \equiv \frac{\gamma\left[(b-1)(\lambda+2)\phi^{2}+2b(\lambda+1)\phi+(b-1)\lambda\right]}{2\sigma\phi}-\frac{\ln\phi}{\sigma-1}>0.
\]
The second term is positive. Hence, agglomeration is always stable
if the first term is also positive:
\[
b>b_s \equiv \frac{(\lambda+2)\phi^{2}+\lambda}{(\phi+1)\left[(\lambda+2)\phi+\lambda\right]}.
\]
It is easy to check that $b_{s}<\tfrac{1}{2}$ if $\phi\in\left(\frac{\lambda }{\lambda +2},1\right)$,
which means that, if $\phi\in\left(\frac{\lambda }{\lambda +2},1\right)$ and $b>\frac{1}{2}$,
agglomeration is stable. In any case, we can conclude that agglomeration
is always stable if related variety is extremely high. 

Let us now define as
\emph{sustain point} \citep{Fujita-Krugman-Venables-Book1999}, a value of $\phi$ such that $\mathcal{S}(\phi)=0.$
We have the following result relating the freeness of trade and the
relatedness of variety.

\begin{prop}
If $b<\tfrac{1}{2}$, there exist at most two sustain points, $\phi_{s1}$
and $\phi_{s2},$ and agglomeration is unstable for $\phi\in \left(0,\phi_{1s}\right)\cup\left(\phi_{2s},1\right) $
and stable for $\phi\in(\phi_{1s},\phi_{2s})$. If $b>\frac{1}{2}$,
there exists a unique sustain point $\phi_{s1}$ and agglomeration
is unstable for $\phi\in(0,\phi_{1s})$ and stable if $\phi\in\left(\phi_{1s},1\right)$.
\end{prop}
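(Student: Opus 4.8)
The plan is to reduce the sign analysis of $\mathcal{S}$ on $(0,1)$ to that of a strictly concave auxiliary function. Since $2\sigma\phi>0$ for $\phi\in(0,1)$ (recall $\sigma>1$, which is what makes $\phi=\tau^{1-\sigma}\in(0,1)$), the sign of $\mathcal{S}(\phi)$ agrees with the sign of
\[
g(\phi)\;:=\;2\sigma\phi\,\mathcal{S}(\phi)\;=\;\gamma\bigl[(b-1)(\lambda+2)\phi^{2}+2b(\lambda+1)\phi+(b-1)\lambda\bigr]-\frac{2\sigma}{\sigma-1}\,\phi\ln\phi .
\]
Differentiating twice gives $g''(\phi)=2\gamma(b-1)(\lambda+2)-\tfrac{2\sigma}{(\sigma-1)\phi}$, which is strictly negative for every $\phi>0$ because $b<1$, $\gamma>0$, $\lambda>0$ and $\sigma>1$. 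Hence $g$ is strictly concave on $(0,1]$, so it has at most two zeros there, which already yields the assertion that there are at most two sustain points $\phi_{1s}\le\phi_{2s}$.

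Next I would evaluate the endpoints. Since $\phi\ln\phi\to 0$ as $\phi\to 0^{+}$, we get $g(0^{+})=\gamma(b-1)\lambda<0$ for all admissible parameters. At $\phi=1$ the logarithmic term vanishes and the polynomial collapses to $g(1)=2\gamma(\lambda+1)(2b-1)$, whose sign is exactly that of $2b-1$. The two cases of the proposition then drop out of concavity together with these two values. If $b>\tfrac12$, then $g(0^{+})<0<g(1)$, so a strictly concave function crosses zero exactly once on $(0,1)$, at some $\phi_{1s}$, with $g<0$ on $(0,\phi_{1s})$ and $g>0$ on $(\phi_{1s},1]$; translating back, agglomeration is unstable for $\phi\in(0,\phi_{1s})$ and stable for $\phi\in(\phi_{1s},1)$. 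If $b<\tfrac12$, then $g$ is negative at both endpoints, so strict concavity leaves only two possibilities: either $g<0$ throughout $(0,1)$, in which case there is no sustain point and agglomeration is unstable on the whole range (the interval $(\phi_{1s},\phi_{2s})$ being empty), or $g$ takes positive values and then has exactly two zeros $\phi_{1s}<\phi_{2s}$ with $g>0$ precisely on $(\phi_{1s},\phi_{2s})$ and $g<0$ on $(0,\phi_{1s})\cup(\phi_{2s},1)$, i.e. agglomeration is stable exactly on $(\phi_{1s},\phi_{2s})$ and unstable on $(0,\phi_{1s})\cup(\phi_{2s},1)$. The borderline configuration in which the concave $g$ merely touches zero at one point is a tangency, hence an irregular case confined to a measure-zero parameter set, and can be set aside.

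The only genuine trick, and therefore the step I would flag as the crux, is the passage from $\mathcal{S}$ to $g$: $\mathcal{S}$ itself is not concave, because of the $\phi^{-1}$ factor and the $\ln\phi$ term, so one cannot reason about its zeros directly; multiplying by $2\sigma\phi$ clears the denominator and produces a function whose second derivative is manifestly negative, after which everything is the elementary fact that a strictly concave function on an interval has at most two zeros with a sign pattern pinned down by its endpoint values. The one place demanding a line of care is the limit $\phi\to 0^{+}$, where one must observe that the vanishing $\phi\ln\phi$ contribution is dominated by the constant $\gamma(b-1)\lambda<0$.
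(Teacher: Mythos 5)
Your proof is correct, and it reaches the conclusion by a slightly different device than the paper. The paper works with $\mathcal{S}$ itself: it computes $\lim_{\phi\to 0^{+}}\mathcal{S}=-\infty$ and $\mathcal{S}(1)=\gamma(2b-1)(\lambda+1)/\sigma$, then shows from the sign of $d\mathcal{S}/d\phi$ (governed by a quadratic whose only admissible root is a $\phi^{+}\in(0,1)$) that $\mathcal{S}$ is increasing on $(0,\phi^{+})$ and decreasing on $(\phi^{+},1)$, i.e.\ unimodal, which gives at most two zeros and the stated sign pattern in each case for $b$. You instead multiply by $2\sigma\phi>0$ and observe that $g(\phi)=2\sigma\phi\,\mathcal{S}(\phi)$ has $g''(\phi)=2\gamma(b-1)(\lambda+2)-\tfrac{2\sigma}{(\sigma-1)\phi}<0$, so strict concavity immediately bounds the number of zeros by two and makes the superlevel set $\{g\ge 0\}$ an interval; the endpoint values $g(0^{+})=\gamma(b-1)\lambda<0$ and $g(1)=2\gamma(\lambda+1)(2b-1)$ then pin down the two cases exactly as in the proposition. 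The substance is the same (single-hump shape plus endpoint signs), but your route buys a cleaner argument: it avoids solving the quadratic for $\phi^{+}$ and checking which root lies in $(0,1)$, and it replaces the $-\infty$ limit with a finite negative endpoint value. Conversely, the paper's first-derivative analysis delivers the location of the interior maximum $\phi^{+}$, which is mildly more informative. Your explicit remarks that for $b<\tfrac12$ the pair of sustain points may fail to exist (with $(\phi_{1s},\phi_{2s})$ empty) and that a tangency is a non-generic, irregular case are consistent with the paper's ``at most two'' formulation and its regularity convention, so nothing is missing.
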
\label{prop3}

\begin{proof}
See Appendix A.
\end{proof}

The result in Proposition 3 suggests that an intermediate level of economic integration favours agglomeration only if the
interaction with foreign scientists is relatively more important for
the chance of successful innovation, i.e. if related variety is not too high. By contrast, if the within region
interaction of scientists is more important, agglomeration is only
possible when the freeness of trade is high enough.

\subsubsection{Symmetric dispersion}

Regarding symmetric dispersion $z^{*}=\frac{1}{2}$, using (\ref{eq:indirect utility final})
in (\ref{eq:utility differential}) we can say that it is stable if:
\begin{equation}
\mathcal{B}\equiv \gamma(\sigma-1)\left[2b(\lambda+1)(\phi+1)^{2}-(2\lambda+3)\phi^{2}-2\lambda-1\right]+2\sigma\left(1-\phi^{2}\right)<0.\label{eq:stability symmetric dispersion}
\end{equation}
 In fact, it is always unstable if the first term is positive, i.e.
if:
\[
b>\bar{b}\equiv\frac{(2 \lambda +3) \phi ^2+2 \lambda +1}{2 (\lambda +1) (\phi +1)^2}.
\]

\noindent This means that if related variety is prohibitively high,
symmetric dispersion is surely unstable.

We can observe that $\mathcal{B}$ in (\ref{eq:stability symmetric dispersion}) is a second
degree polynomial in $\phi$ with at most two zeros, i.e.,\emph{ break
points }$\phi_{b1}$ and $\phi_{b2}$, with $\phi_{b1}<\phi_{b2}$
and has a negative leading coefficient. Therefore if both break points
exist, we have that symmetric dispersion is stable for $\phi\in(0,\phi_{b1})\cup (\phi_{b2},1)$ and
unstable for $\phi\in\left(\phi_{b1},\phi_{b2}\right)$. The expressions for the break points, along with the conditions for their existence, are provided in Appendix A.4. We have the following result.

\begin{prop}
For $b>\frac{1}{2},$ symmetric dispersion is stable for $\phi\in(0,\phi_{b1})$
and unstable for $\phi\in(\phi_{b1},1)$, provided that $b$ is not
too high (otherwise it is always unstable). If $b<\frac{1}{2},$ symmetric
dispersion is stable for $\phi\in(0,\phi_{b1})\cup(\phi_{b2},1)$
and unstable for $\phi\in(\phi_{b1},\phi_{b2}),$ provided that $b$
is not too low (otherwise it is always stable).
\end{prop}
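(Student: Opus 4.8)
\noindent\emph{Proof sketch / plan.} The plan is to treat $\mathcal{B}$ from \eqref{eq:stability symmetric dispersion} as a quadratic polynomial in $\phi$ and exploit the already-noted fact that its leading coefficient is negative: it is a downward-opening parabola, so $\mathcal{B}$ has at most two real roots $\phi_{b1}\le\phi_{b2}$ (the break points), and $\{\phi:\mathcal{B}(\phi)<0\}$ is exactly the complement of the closed interval $[\phi_{b1},\phi_{b2}]$ (the whole line when there are no real roots). Since symmetric dispersion is stable precisely where $\mathcal{B}<0$, the proposition reduces to locating $\phi_{b1}$ and $\phi_{b2}$ relative to the endpoints of $(0,1)$, which I would do by evaluating $\mathcal{B}$ at $\phi=0$ and $\phi=1$, locating the vertex, and then invoking the intermediate value theorem together with concavity.

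The two boundary evaluations carry all the information. At $\phi=1$ one gets $\mathcal{B}(1)=4\gamma(\sigma-1)(\lambda+1)(2b-1)$, whose sign is exactly that of $2b-1$; this is the source of the dichotomy in the statement. At $\phi=0$ one gets $\mathcal{B}(0)=\gamma(\sigma-1)\bigl[2b(\lambda+1)-(2\lambda+1)\bigr]+2\sigma$, which is strictly increasing in $b$ (the coefficient of $b$ being $2\gamma(\sigma-1)(\lambda+1)>0$), hence negative for $b$ below some threshold and positive above it. Finally, the coefficient of $\phi$ in $\mathcal{B}$ equals $4b\gamma(\sigma-1)(\lambda+1)>0$ while the leading coefficient is negative, so the vertex abscissa $\phi^{\ast}$ is strictly positive, and the maximal value $\mathcal{B}(\phi^{\ast})$ is positive if and only if the discriminant of $\mathcal{B}$ is positive, i.e.\ if and only if two distinct break points exist.

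For $b>\tfrac12$ we have $\mathcal{B}(1)>0$, so $1\in[\phi_{b1},\phi_{b2}]$ and symmetric dispersion is unstable near $\phi=1$; if moreover $b$ is small enough that $\mathcal{B}(0)<0$ (``$b$ not too high''), the intermediate value theorem and the downward shape give exactly one break point in $(0,1)$, namely $\phi_{b1}$ (with $\phi_{b2}\ge 1$), whence $\mathcal{B}<0$ on $(0,\phi_{b1})$ and $\mathcal{B}>0$ on $(\phi_{b1},1)$; while if $b$ is so large that $\mathcal{B}(0)\ge 0$, then $\mathcal{B}$ is nonnegative at both endpoints of $[0,1]$ and, being concave, nonnegative throughout $[0,1]$ --- always unstable. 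For $b<\tfrac12$ we have $\mathcal{B}(1)<0$ and, in the maintained parameter range, also $\mathcal{B}(0)<0$, so $\mathcal{B}$ is negative at both endpoints; because the parabola opens downward with $\phi^{\ast}>0$, it can turn positive inside $(0,1)$ only by bulging above zero near its vertex, which happens precisely when the discriminant is positive and $\phi^{\ast}\in(0,1)$. For $b$ near $0$ the linear coefficient vanishes and the discriminant tends to $-4\,(\text{leading coeff})\,\mathcal{B}(0)<0$, so $\mathcal{B}<0$ on all of $(0,1)$ --- always stable (``$b$ too low''); as $b$ grows the discriminant eventually turns positive, and then the two break points $\phi_{b1}<\phi_{b2}$ both lie in $(0,1)$ (since $\mathcal{B}(0),\mathcal{B}(1)<0$ and $\phi^{\ast}>0$ force $0<\phi_{b1}$ and $\phi_{b2}<1$), giving $\mathcal{B}<0$ on $(0,\phi_{b1})\cup(\phi_{b2},1)$ and $\mathcal{B}>0$ on $(\phi_{b1},\phi_{b2})$. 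The threshold ``$b$ not too low'' is the smallest $b$ for which the discriminant is positive (and $\phi^{\ast}<1$); the explicit closed forms for $\phi_{b1},\phi_{b2}$ and for these thresholds are the ones recorded in Appendix~A.4, and the non-generic boundary configurations (double root, $\phi^{\ast}=1$, a vanishing endpoint value) are discarded by the regularity convention adopted in Section~4.

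The main obstacle is the bookkeeping in the ``$b$ not too low'' case: one must verify that when the discriminant is positive both roots genuinely fall inside $(0,1)$ rather than straddling an endpoint, which requires signing $\mathcal{B}(0)$, $\mathcal{B}(1)$ and the discriminant simultaneously from the unwieldy expressions of Appendix~A.4 --- signing the discriminant as a function of $b$ being the delicate computation. A secondary point is confirming that, under the model's maintained parameter restrictions, $b<\tfrac12$ indeed implies $\mathcal{B}(0)<0$ (otherwise an additional, here non-generic, ``unstable-then-stable'' pattern would appear at small trade freeness), and that the sign change of $\mathcal{B}(0)$ in the $b>\tfrac12$ case leaves a genuine interval $\left(\tfrac12,\cdot\right)$ of admissible $b$.
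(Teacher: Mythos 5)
Your route is essentially the paper's: treat $\mathcal{B}$ as a concave quadratic in $\phi$ and locate its roots relative to $(0,1)$. Your endpoint computations are right ($\mathcal{B}(1)=4\gamma(\sigma-1)(\lambda+1)(2b-1)$, $\mathcal{B}(0)=\gamma(\sigma-1)\left[2b(\lambda+1)-(2\lambda+1)\right]+2\sigma$), your ``not too high'' threshold is exactly the paper's $b_{2}=\frac{\gamma(2\lambda+1)(\sigma-1)-2\sigma}{2\gamma(\lambda+1)(\sigma-1)}$, and your discriminant threshold is the paper's $b_{1}$; the paper does the same bookkeeping by writing $\phi_{b1},\phi_{b2}$ in closed form (Appendix A.4) and giving explicit conditions, (i) on $\gamma$ and (ii) $b\in[b_{1},b_{2})$, for them to lie in $(0,1)$. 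The $b>\frac{1}{2}$ branch of your argument is fine.

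The gap is in the $b<\frac{1}{2}$ branch, at the step ``in the maintained parameter range, also $\mathcal{B}(0)<0$,'' which you later flag but dismiss as excluding only a non-generic configuration. That step fails on an open set of admissible parameters: $\mathcal{B}(0)<0$ iff $b<b_{2}$, and $b_{2}<\frac{1}{2}$ precisely when $\gamma<\frac{2\sigma}{\lambda(\sigma-1)}$ (the case addressed in the footnote of Appendix A.4). At the paper's own benchmark $(\lambda,\gamma,\sigma)=(2,1,8)$ one has $b_{2}=\frac{19}{42}\approx 0.452$, so for any $b\in(0.452,0.5)$ we get $\mathcal{B}(0)>0>\mathcal{B}(1)$: the parabola then has $\phi_{b1}<0<\phi_{b2}<1$, and symmetric dispersion is unstable on $(0,\phi_{b2})$ and stable on $(\phi_{b2},1)$ --- an ``unstable-then-stable'' pattern that is not a knife-edge case removable by the regularity convention, and is not covered by the dichotomy you derive. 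Consequently, your claim that for $b<\frac{1}{2}$ a positive discriminant forces \emph{both} roots into $(0,1)$ needs the extra hypothesis $b<b_{2}$ (``not too high'' also binds below $\frac{1}{2}$); this is exactly what the paper's conditions (i)--(ii) impose, and under them its proof of the stated stability pattern goes through. Fixing your sketch therefore requires replacing the unverified assertion $\mathcal{B}(0)<0$ by the explicit restriction $b\in[b_{1},b_{2})$ (together with the associated condition on $\gamma$), rather than appealing to genericity.
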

\begin{proof}
    See Appendix A.4.
\end{proof}
This means that our model accounts for
the possibility of initial agglomeration as trade integration increases
from a low level, and (complete) re-dispersion once trade integration becomes high enoug.

\subsubsection{Asymmetric dispersion}

Although we cannot find an explicit stability condition for any asymmetric
dispersion equilibrium $z^{*}\in\left(\frac{1}{2},1\right)$, we can use 
equation (\ref{eq:equilibriumcondition}) that solves the equilibrium condition $\Delta v=0$ given
implicitly by $\lambda=\lambda^{*}(z)$ in the proof of Proposition
2 (Appendix A.2).\footnote{The same approach was adopted e.g. by \citet{Gaspar-et-al-ET2018} and \citep{Gaspar-et-al-RSUE2021}.} Then the stability condition of an asymmetric dispersion
equilibrium is given by:
\[
\left.\frac{d\Delta v}{dz}(z^{*})\right|_{\lambda=\lambda^{*}(z)}<0.
\]

\noindent Specifically, using (\ref{eq:indirect utility final}) and differentiating (\ref{eq:utility differential}) with respect to $z$, and evaluating at (\ref{eq:equilibriumcondition}), we get that an asymmetric equilibrium $z^{*}\in\left(\frac{1}{2},1\right)$ is stable if $\lambda^{*}(z)>0$ and:

\begin{align}
\mathcal{G}\equiv & (2z-1)\left(\phi^{2}-1\right)\left[(2b-1)\gamma(\sigma-1)(1-2z)^{2}-\sigma\right]+\nonumber \\
 & +\sigma\left[2z^{2}(\phi-1)^{2}-2z(\phi-1)^{2}+\phi^{2}+1\right]\ln\left[\frac{z(\phi-1)+1}{z(1-\phi)+\phi}\right]<0.\label{eq:stabasdisp}
\end{align}

\noindent We have the following result.

\begin{prop}
If $b<\frac{1}{2}$ an asymmetric equilibrium $z^{*}\in\left(\frac{1}{2},1\right)$ is stable for a  high enough related variety. If $b>\frac{1}{2}$, an asymmetric equilibrium is always stable when it exists.
\end{prop}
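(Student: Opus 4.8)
The starting point is the stability inequality (\ref{eq:stabasdisp}): an asymmetric equilibrium $z^{*}\in(\tfrac12,1)$ is stable precisely when $\lambda^{*}(z^{*})>0$ (automatic when the equilibrium exists) and $\mathcal{G}<0$. The plan is to exploit that $\mathcal{G}$ is \emph{affine in} $b$. Differentiating (\ref{eq:stabasdisp}),
\[
\frac{\partial\mathcal{G}}{\partial b}=2\gamma(\sigma-1)(2z-1)(1-2z)^{2}\left(\phi^{2}-1\right)=2\gamma(\sigma-1)(2z-1)^{3}\left(\phi^{2}-1\right)<0
\]
for all $z\in(\tfrac12,1)$ and $\phi\in(0,1)$, so $\mathcal{G}$ is strictly decreasing in $b$. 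It therefore suffices to control $\mathcal{G}$ at the threshold $b=\tfrac12$.

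At $b=\tfrac12$ the $\gamma$-term drops out and $\mathcal{G}|_{b=1/2}=\sigma\,g(z,\phi)$ with
\[
g(z,\phi)=(2z-1)\left(1-\phi^{2}\right)+\left[1+\phi^{2}-2(1-\phi)^{2}z(1-z)\right]\ln\frac{1-(1-\phi)z}{\phi+(1-\phi)z}.
\]
The next step is to set $N\equiv1-(1-\phi)z$ and $D\equiv\phi+(1-\phi)z$ — these are the aggregators $\sum_{j}\phi_{2j}z_{j}$ and $\sum_{j}\phi_{1j}z_{j}$ inside the regional price indices (\ref{eq:price index final}) — and use the identities $N+D=1+\phi$, $D-N=(1-\phi)(2z-1)$, and the one that does the real work, $ND=\phi+(1-\phi)^{2}z(1-z)$. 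These give $1+\phi^{2}-2(1-\phi)^{2}z(1-z)=(N+D)^{2}-2ND=N^{2}+D^{2}$ and $(2z-1)(1-\phi^{2})=(D-N)(D+N)=D^{2}-N^{2}$, whence
\[
g(z,\phi)=\left(D^{2}-N^{2}\right)+\left(N^{2}+D^{2}\right)\ln\frac{N}{D}=D^{2}\,h(r),\qquad h(r)\equiv\left(1-r^{2}\right)+\left(1+r^{2}\right)\ln r,
\]
where $r\equiv N/D$, and $r\in(\phi,1)\subset(0,1)$ for $z\in(\tfrac12,1)$.

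It then remains to check $h(r)<0$ on $(0,1)$, which is elementary: $h(1)=h'(1)=h''(1)=0$ while $h'''(r)=\tfrac{2}{r}+\tfrac{2}{r^{3}}>0$, so integrating downward from $r=1$ yields successively $h''<0$, $h'>0$, and $h<0$ throughout $(0,1)$. Hence $\mathcal{G}|_{b=1/2}=\sigma D^{2}h(r)<0$ for every admissible $(z,\phi)$. For $b>\tfrac12$, monotonicity in $b$ gives $\mathcal{G}\le\mathcal{G}|_{b=1/2}<0$, so any asymmetric equilibrium is stable, which is the second assertion. For $b<\tfrac12$ we decompose
\[
\mathcal{G}=\mathcal{G}|_{b=1/2}+\left(\tfrac12-b\right)2\gamma(\sigma-1)(2z-1)^{3}\left(1-\phi^{2}\right);
\]
the first summand is negative and the second is positive and continuous in $b$, vanishing as $b\uparrow\tfrac12$, so there is $\bar{b}(\phi,\sigma,\gamma)<\tfrac12$ with $\mathcal{G}<0$ for all $b\in(\bar{b},\tfrac12)$ — the ``high enough related variety'' regime. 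The threshold can be taken uniform in the equilibrium location because, as $z^{*}\to\tfrac12$, both $-\mathcal{G}|_{b=1/2}$ and the perturbation vanish at the common cubic rate $(2z-1)^{3}$ (using $h(r)\sim\tfrac23(r-1)^{3}$ and $r-1\propto 1-2z$), while away from $\tfrac12$ the quantity $-\mathcal{G}|_{b=1/2}$ is bounded below by a positive constant.

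The only non-mechanical step is the algebraic collapse in the second paragraph — recognizing the substitution $r=N/D$ and verifying $ND=\phi+(1-\phi)^{2}z(1-z)$, which turns $\mathcal{G}|_{b=1/2}$ into a positive multiple of a single-variable function whose sign is settled by threefold differentiation. Absent that observation one is left facing a two-variable transcendental inequality in $(z,\phi)$ with no visible handle.
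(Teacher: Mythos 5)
Your proof is correct, and its central step is genuinely different from the paper's. You share the paper's starting point — $\mathcal{G}$ is affine and strictly decreasing in $b$, with $\partial\mathcal{G}/\partial b=-2\gamma(\sigma-1)(2z-1)^{3}(1-\phi^{2})<0$ — but you then settle the sign question at the boundary $b=\tfrac12$ directly, collapsing $\mathcal{G}|_{b=1/2}$ via the substitution $r=N/D$ (with $N=1-(1-\phi)z$, $D=\phi+(1-\phi)z$, $ND=\phi+(1-\phi)^{2}z(1-z)$) into $\sigma D^{2}h(r)$ with $h(r)=(1-r^{2})+(1+r^{2})\ln r$, whose negativity on $(0,1)$ follows from $h(1)=h'(1)=h''(1)=0$ and $h'''>0$; these identities check out, and $r\in(\phi,1)$ for $z\in(\tfrac12,1)$ as you claim. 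The paper instead solves $\mathcal{G}=0$ for the critical value $b_{c}(z,\phi)$ and proves $b_{c}<\tfrac12$ by a monotonicity cascade: $\lim_{\phi\to1}b_{c}=\tfrac12$ and $\partial b_{c}/\partial\phi>0$, the latter resting on an auxiliary function $\mathcal{N}$ shown negative through $\partial\mathcal{N}/\partial z<0$ and $\mathcal{N}(\tfrac12)=0$. The two routes establish the same intermediate fact (the stability threshold in $b$ lies below $\tfrac12$), but yours trades the nested two-variable monotonicity argument for a single-variable calculus inequality, and your perturbation argument for $b<\tfrac12$ — with the matching cubic rates $(2z-1)^{3}$ near $z=\tfrac12$ — additionally yields a threshold $\bar{b}<\tfrac12$ that is uniform over the location of the equilibrium, which the paper does not state; what the paper's route buys in exchange is the explicit critical value $b_{c}(z,\phi)$ in equation (\ref{eq:bcritconda}), which is useful for comparative statics and for tracing how the stability frontier moves with $\phi$.
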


\begin{proof}
See Appendix A.
\end{proof}

\begin{figure}
\begin{centering}
\includegraphics[scale=0.75]{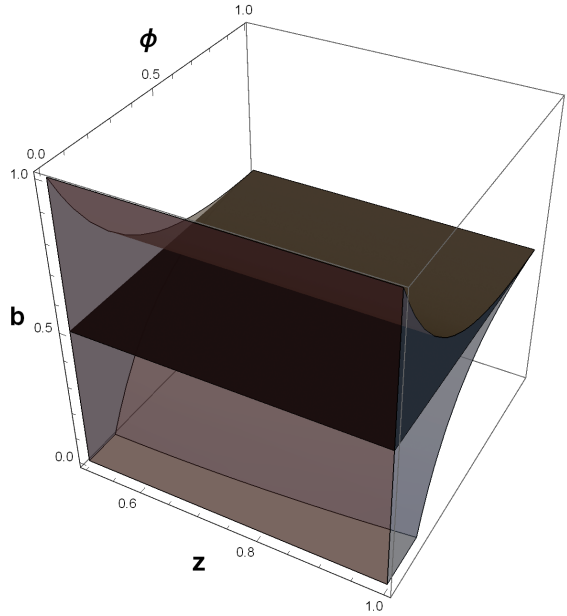}
\par\end{centering}
\caption{Stability of asymmetric dispersion. The less opaque surface corresponds to
$\mathcal{G}<0$ and $\lambda^*(z)>0$ in $(z,\phi,b)$--space for $\sigma=8$ and $\gamma=1$. The black plane corresponds
to $b=\frac{1}{2}.$ }
\label{asymmdispersionfig}
\end{figure}

 Figure~\ref{asymmdispersionfig} illustrates the Proposition by setting $\sigma=8$ and $\gamma=1$ and plotting the surface corresponding to $\left\{(z,\phi,b):\mathcal{G}<0 \cap \lambda^*(z)>0\right\}$. For $b<\frac{1}{2}$, an asymmetric equilibrium may exist that is not stable, and a higher $b$ favours its stability. If $b>\frac{1}{2}$, an asymmetric equilibrium is always stable when it exists, but its existence seems to be favoured by a lower $b$. In other words, an asymmetric equilibrium exists and is stable when $b$ is close enough to $\frac{1}{2}$. 
 
 Regarding $\phi$, a higher freeness of trade  seems to disfavour the stability of asymmetric dispersion.

\section{The impact of economic integration}

It is common in geographical economics to study the qualitative change of the spatial economy as economic integration increases. We will now look at some bifurcation diagrams using the freeness of trade, $\phi$, as the bifurcation parameter. To provide a complete gallery, we depict 6 qualitatively different scenarios, keeping most parameter values constant (except for the sixth scenario) and varying $b$, thus placing emphasis on changes in the value of related variety. Two additional illustrations for different parameter values are provided in Appendix B. The illustrations, 8 in total,  exhaust all mathematical/numerical possibilities.\footnote{This can be shown through the combination of the analysis performed in the previous sections with various simulations under a very wide range of parameter values.} The six scenarios analysed in this Section are as follows:

\begin{enumerate}[(i).]
\item  $\left(\lambda,\gamma,\sigma,b\right)=\left(2,1,8,0.33\right)$;
\item   $\left(\lambda,\gamma,\sigma,b\right)=\left(2,1,8,0.338\right)$;
\item  $\left(\lambda,\gamma,\sigma,b\right)=\left(2,1,8,0.339\right)$; 
\item  $\left(\lambda,\gamma,\sigma,b\right)=\left(2,1,8,0.35\right)$;
\item  $\left(\lambda,\gamma,\sigma,b\right)=\left(2,1,8,0.55\right)$;
\item $\left(\lambda,\gamma,\sigma,b\right)=\left(4,1,8,0.55\right)$;
\end{enumerate}

\begin{figure}
     \centering
     \begin{subfigure}{0.47\textwidth}
         \centering
         \includegraphics[width=\textwidth]{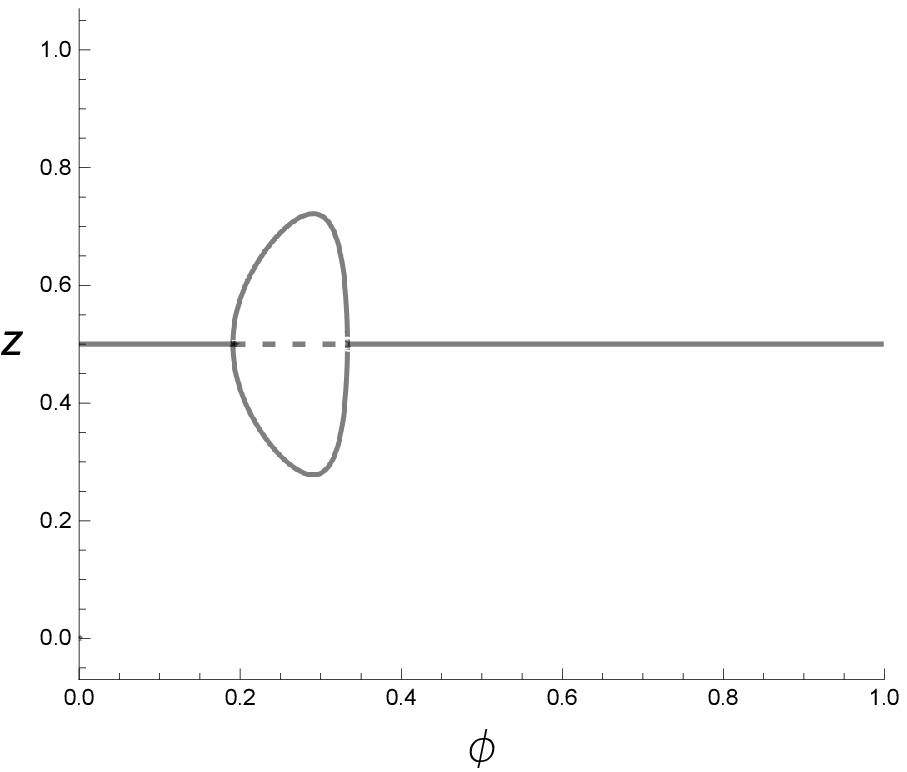}
     \end{subfigure}
     \hfill
     \begin{subfigure}{0.47\textwidth}
         \centering
         \includegraphics[width=\textwidth]{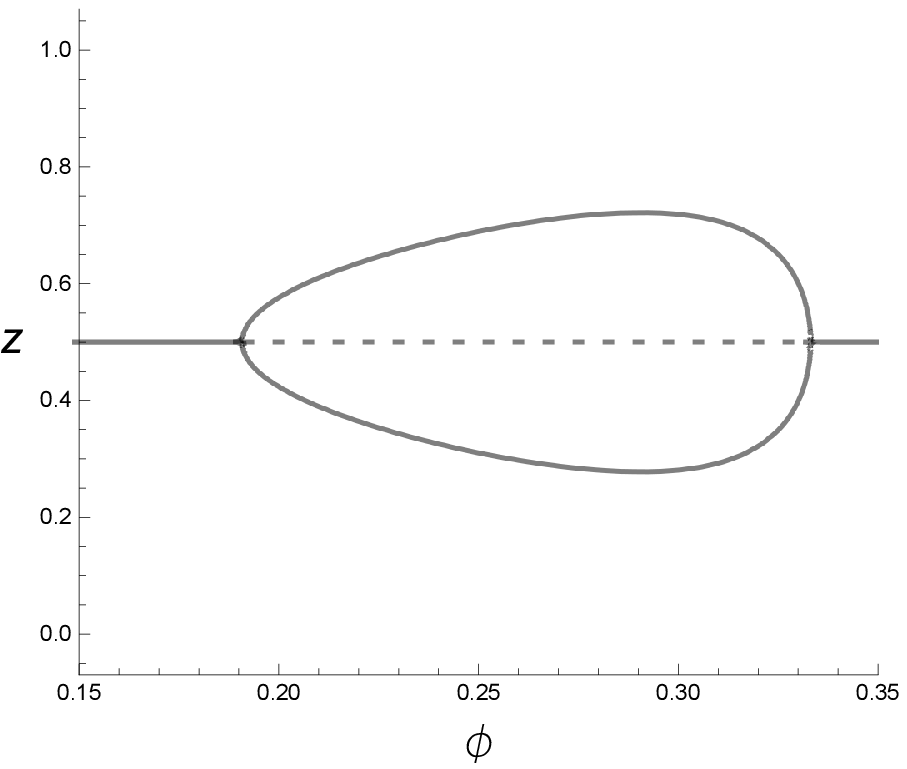}
     \end{subfigure}
     \caption{Bifurcation diagram for scenario (i). Filled lines correspond to stable equilbria and dashed lines correspond to unstable equilibria.}
     \label{bifdiagramscenarioi}
\end{figure}

\begin{figure}
     \centering
     \begin{subfigure}{0.47\textwidth}
         \centering
         \includegraphics[width=\textwidth]{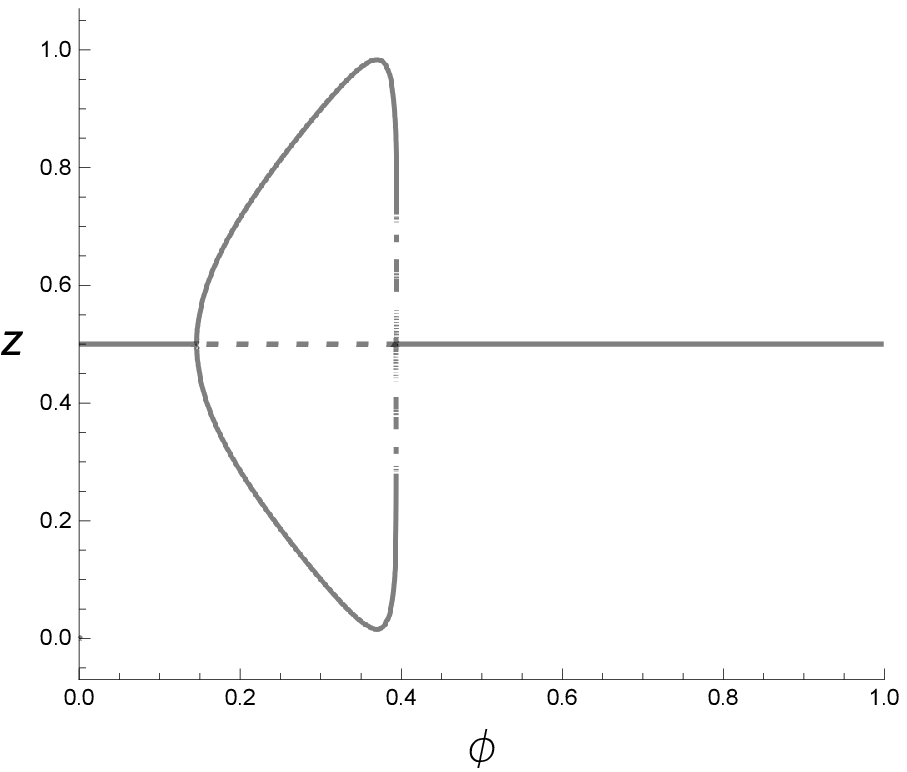}
     \end{subfigure}
     \hfill
     \begin{subfigure}{0.47\textwidth}
         \centering
         \includegraphics[width=\textwidth]{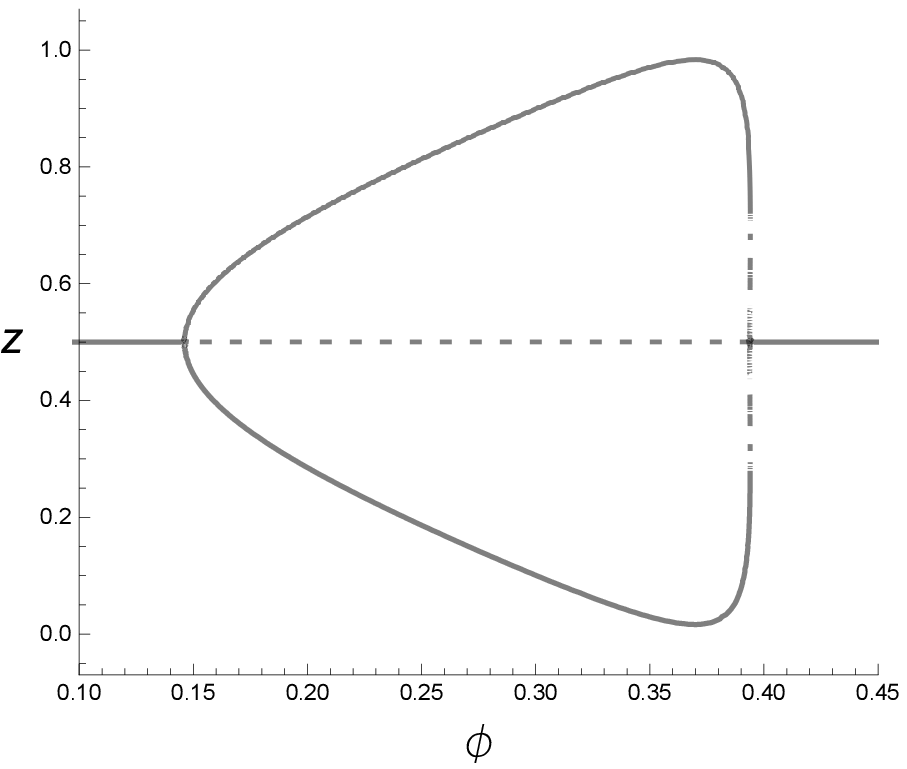}
     \end{subfigure}
     \caption{Bifurcation diagram for scenario (ii). Filled lines correspond to stable equilbria and dashed lines correspond to unstable equilibria.}
     \label{bifdiagramscenarioii}
\end{figure}

\begin{figure}
     \centering
     \begin{subfigure}{0.47\textwidth}
         \centering
         \includegraphics[width=\textwidth]{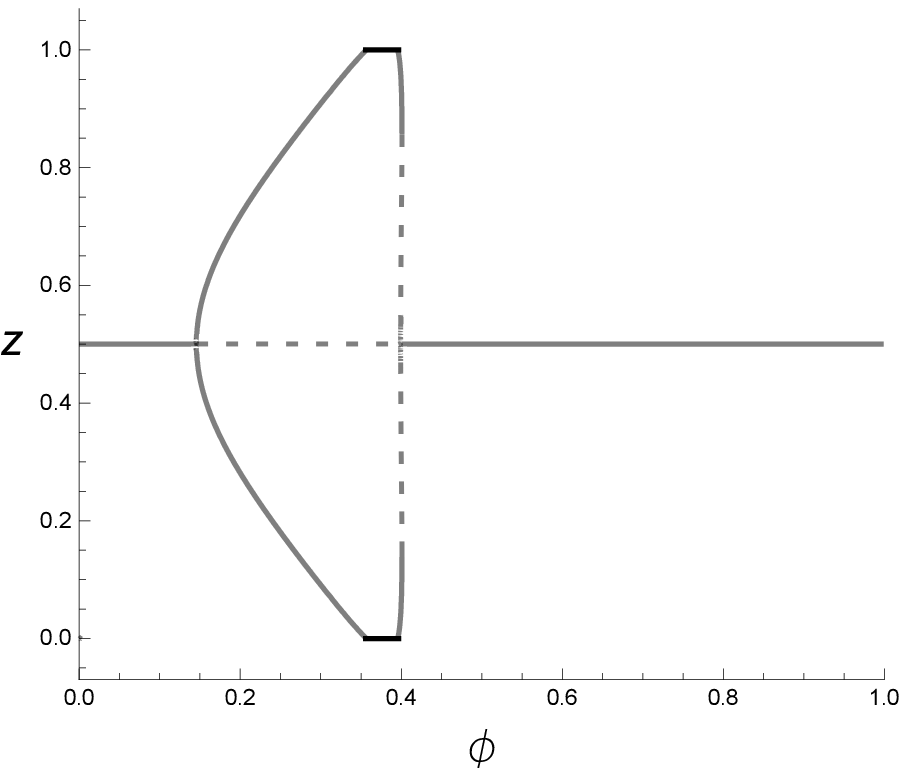}
     \end{subfigure}
     \hfill
     \begin{subfigure}{0.47\textwidth}
         \centering
         \includegraphics[width=\textwidth]{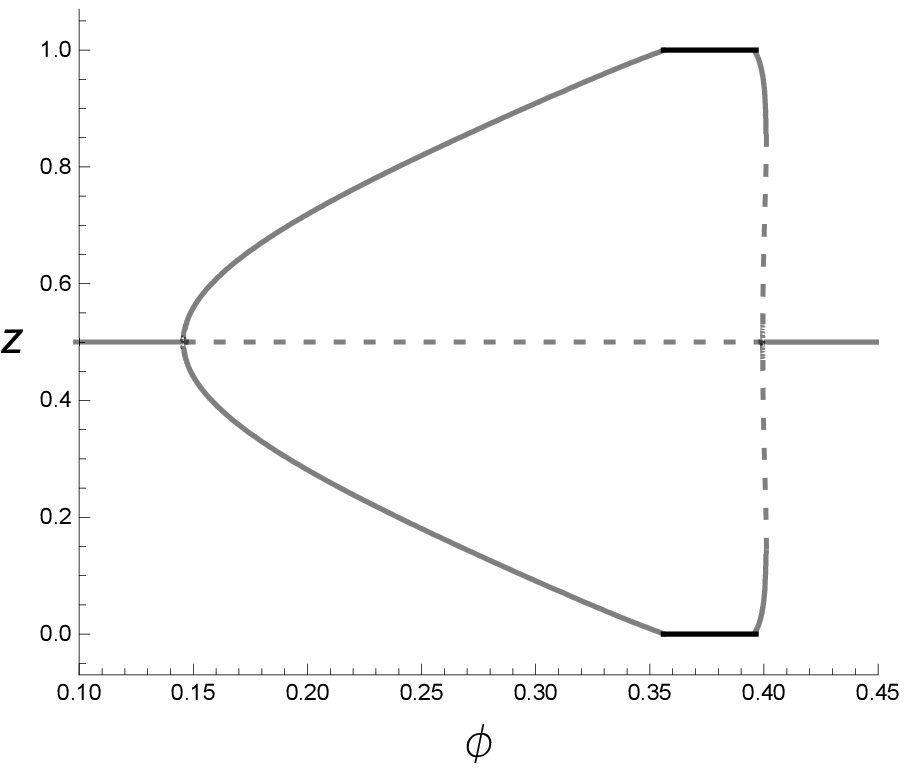}
     \end{subfigure}
     \caption{Bifurcation diagram for scenario (iii). Filled lines correspond to stable equilibria, and dashed lines correspond to unstable equilibria.}
     \label{bifdiagramscenarioiii}
\end{figure}

\begin{figure}
     \centering
     \begin{subfigure}{0.47\textwidth}
         \centering
         \includegraphics[width=\textwidth]{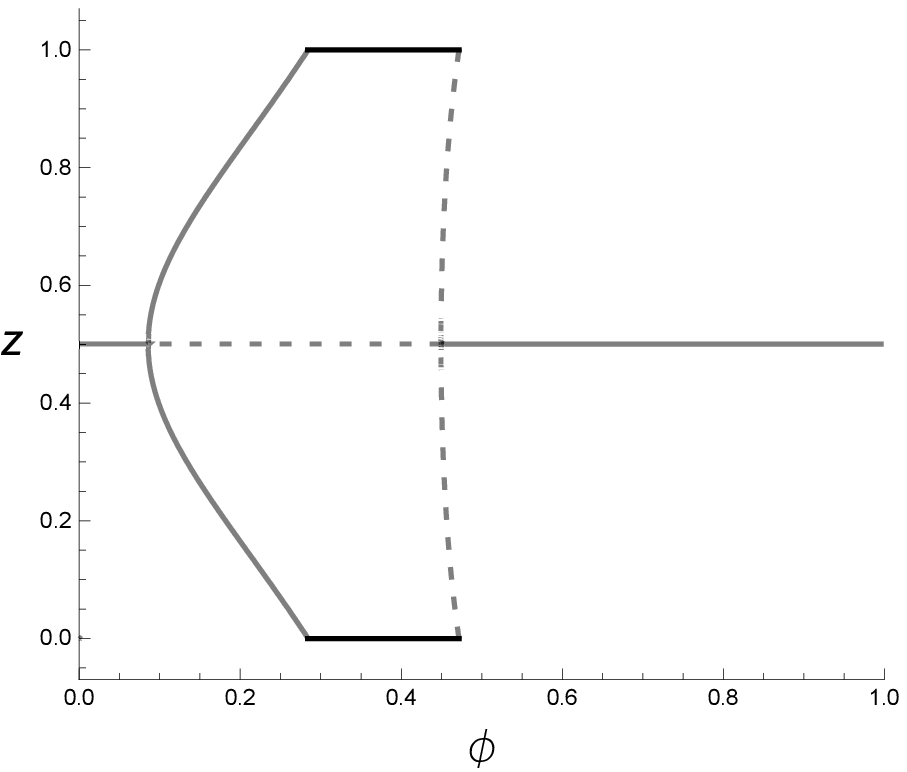}
     \end{subfigure}
     \hfill
     \begin{subfigure}{0.47\textwidth}
         \centering
         \includegraphics[width=\textwidth]{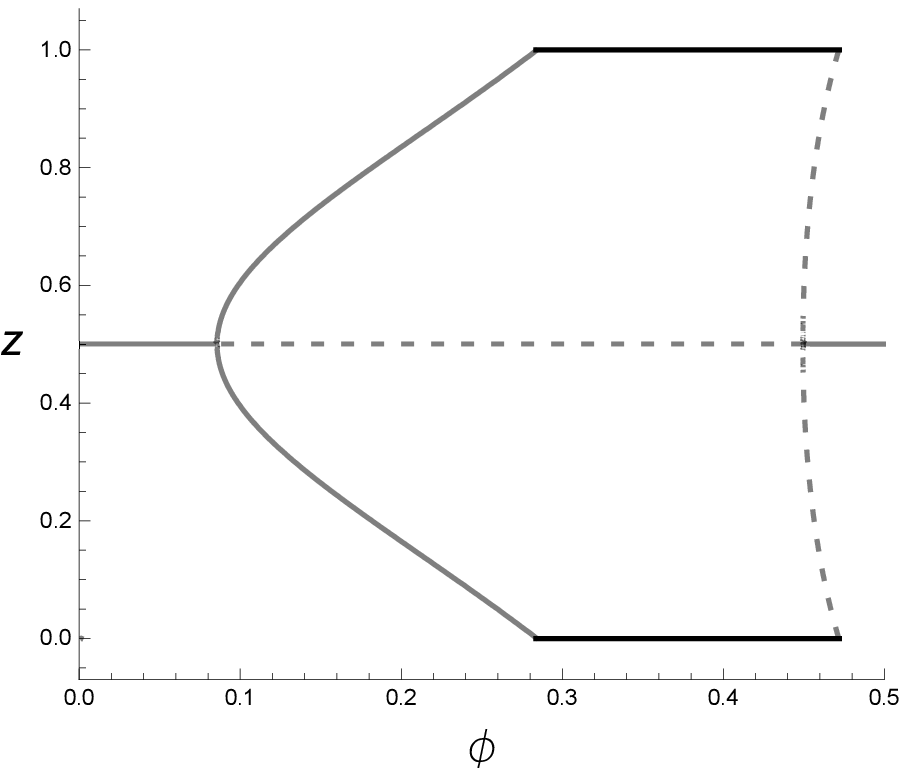}
     \end{subfigure}
     \caption{Bifurcation diagram for scenario (iv). Filled lines correspond to stable equilbria and dashed lines correspond to unstable equilibria.}
     \label{bifdiagramscenarioiv}
\end{figure}

\begin{figure}
     \centering
     \begin{subfigure}{0.47\textwidth}
         \centering
         \includegraphics[width=\textwidth]{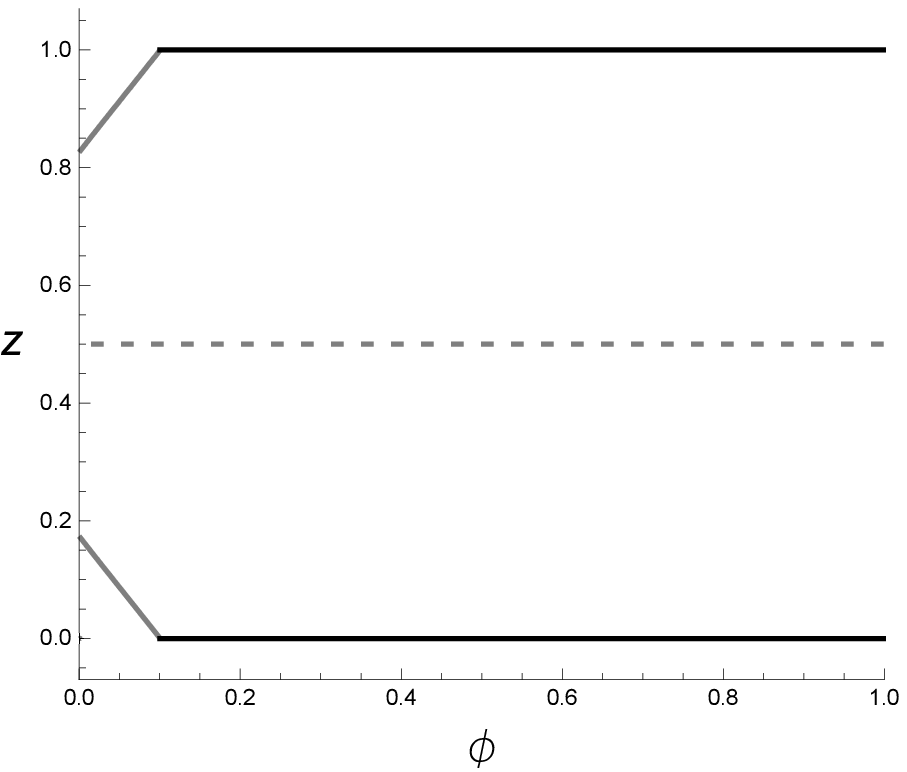}
     \end{subfigure}
     \hfill
     \begin{subfigure}{0.47\textwidth}
         \centering
         \includegraphics[width=\textwidth]{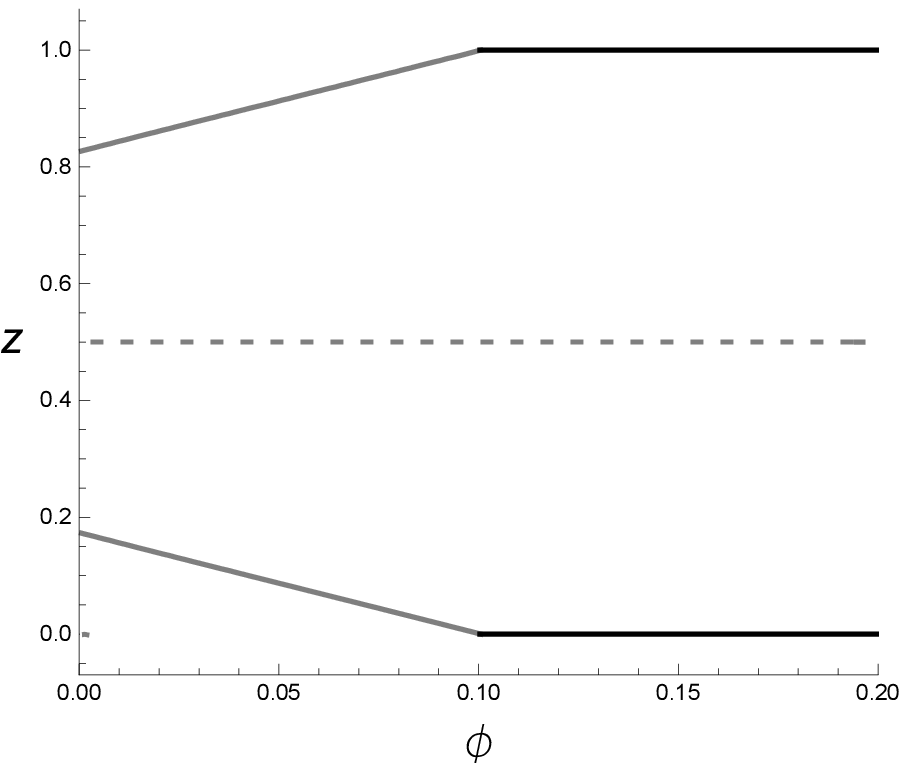}
     \end{subfigure}
     \caption{Bifurcation diagram for scenario (v). Filled lines correspond to stable equilbria and dashed lines correspond to unstable equilibria.}
     \label{bifdiagramscenariov}
\end{figure}

\begin{figure}
     \centering
     \begin{subfigure}{0.47\textwidth}
         \centering
         \includegraphics[width=\textwidth]{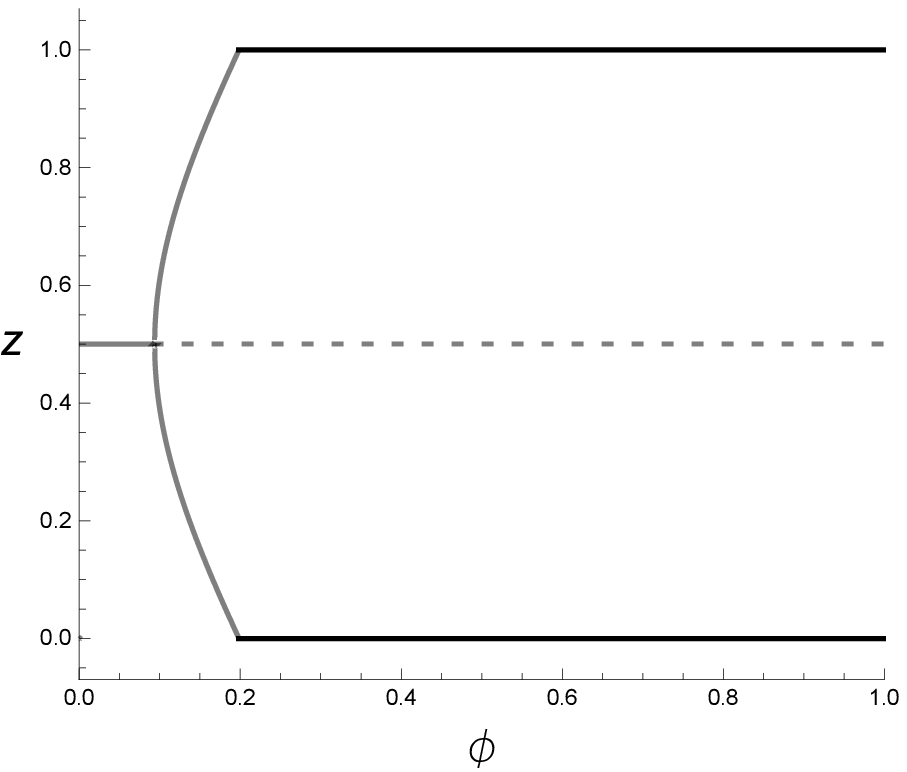}
     \end{subfigure}
     \hfill
     \begin{subfigure}{0.47\textwidth}
         \centering
         \includegraphics[width=\textwidth]{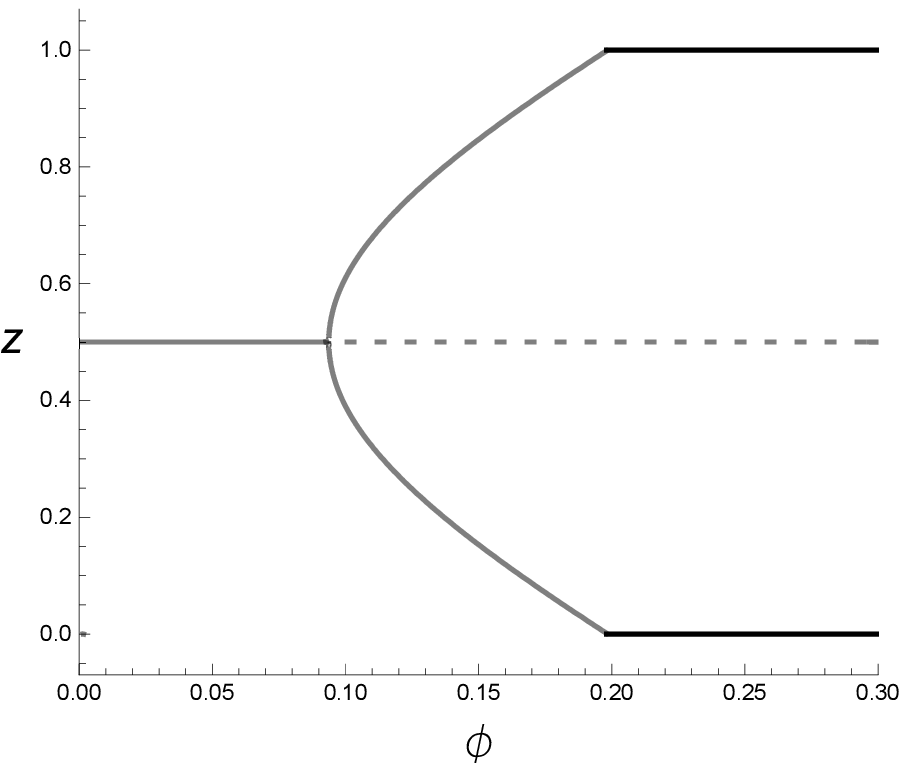}
     \end{subfigure}
     \caption{Bifurcation diagram for $(\lambda,\gamma,\sigma,b)=(4,0.9,8,0.55)$. Filled lines correspond to stable equilbria and dashed lines correspond to unstable equilibria.}
     \label{bifdiagramscenariobhigh}
\end{figure}

For a prohibitively low related variety, scientists disperse evenly among the two regions, irrespective of the value of the freeness of trade. The economic intuition is simple: a lower related variety implies higher chance of successful innovation with more scientists living in the other region. Hence, the nominal wage is higher when the scientists are more evenly distributed. 

In scenario (i), shown in Figure~\ref{bifdiagramscenarioi}, related variety is such that within-region interaction is relatively less important ($b=0.33$). For a low freeness of trade, symmetric dispersion is stable because firms wish to avoid the burden of a very costly transportation supplying to farmers from full agglomeration in a single region. As $\phi$ increases, the economy initially agglomerates, but then re-disperses as $\phi$ increases further. This re-dispersion process occurs because, for a very high economic integration, firms find it profitable to relocate to the less industrialized  region in order to benefit from the pool of scientists in the more agglomerated region, which generates a higher chance of innovation and thus higher expected profits. Noteworthy, the turning point in the agglomeration process happens \textit{before} industry reaches full agglomeration in a single region, as in \citet{Pfluger-Suedekum-JUE2008}. However, contrary to the latter, our model does not predict full agglomeration in the entire parameter range of economic integration when related variety is low enough. Re-dispersion in scenario (i) is more akin to geographical economic models of vertical linkages between upstream and downstream firms by \citet{Krugman-Venables-QJE1995,venables1996equilibrium} and \citet{Puga-EER1999}. However, in these models, re-dispersion is smooth altogether and occurs when workers are inter-regionally \emph{immobile} and firms become too sensitive to regional cost differentials when economic integration is very high.

In scenario (ii), illustrated by Figure~\ref{bifdiagramscenarioii}, related variety is just slightly higher, and the model still accommodates for re-dispersion. However, the re-dispersion process is not smooth -- the economy suddenly jumps to symmetric dispersion from a fairly  asymmetric equilibrium spatial distribution. 

Scenario (iii) also just slightly increases related variety compared to the previous scenario (see Figure~\ref{bifdiagramscenarioiii}), and the story of spatial outcomes as economic integration increases is very similar, except that, in this case, full agglomeration is stable for a small range of intermediate values of $\phi$, as predicted by Proposition 3. The parametrization here  also corresponds to that illustrated in Figure~\ref{fig:Figure1}. 

The re-dispersion processes of scenarios (ii) and (iii) are  uncommon in the literature of geographical economics; rather, such jumps occur in early models \citep{Fujita-Krugman-Venables-Book1999,BaldwinForslidMartinOttavianoRobertNicoud+2003} from the state of symmetric dispersion to catastrophic agglomeration \citep{behrens2011tempora}. 
The reverse discontinuous jump, i.e., from symmetric dispersion to partial agglomeration as trade costs steadily decrease, has been uncovered in the model by \citet{pfluger2010size}, where \emph{all} production factors, except land, which is used both for housing and production, are inter-regionally mobile. Their conclusions about spatial outcomes  reveal a line-symmetry of scenario (iii): as integration increases, the economy jumps discontinuously from symmetric dispersion to partial agglomeration, and
the ensuing re-dispersion is gradual and continuous.\footnote{In our model, the assumption that unskilled workers are immobile is useful for tractability, as is the case of all footloose entrepreneur models \citep{BaldwinForslidMartinOttavianoRobertNicoud+2003}. However, we make the reasonable conjecture that immobile labour generates an unnecessary dispersion force that  changes the conclusions of our model compared to the case of a perfectly mobile workforce only in the sense of ``reversed'' stability as transport costs decrease, i.e. the line-symmetry of all scenarios (i)--(vi). }

Figure~\ref{bifdiagramscenarioiv} illustrates scenario (iv) and shows that the sudden re-dispersion process under a slightly higher $b$ now happens from the state of full agglomeration directly to the state of symmetric dispersion. In both scenarios (iii) and (iv), the state of agglomeration is stable for intermediate values of economic integration, as in \citet{robert2008offshoring}.

In scenario (v), for a sufficiently high related variety ($b>1/2$), within-region interaction among scientists improves the chances of innovation enough such that the real wage becomes higher when they are either partially agglomerated in one region for low values of $\phi$, or completely agglomerated in one region for a high enough $\phi$. This is portrayed in Figure~\ref{bifdiagramscenariov}. Scenario (v) precludes the so-called ``no black-hole condition'' \citep{Fujita-Krugman-Venables-Book1999}, a condition that the constant elasticity of substitution $\sigma$ must be high enough such that symmetric dispersion can be stable for low enough economic integration. As argued by \citet{Gaspar-et-al-ET2018}, this condition may be unwarranted if its exclusion allows for spatial outcomes other than ubiquitous agglomeration. 

We can thus conclude that a higher related-variety is associated with a more pronounced agglomeration during the industrialization process, for intermediate values of economic integration, until eventually it becomes so high that re-dispersion is no longer possible because within-region interaction among scientists is too important to make any deviation to a deindustrialized region worthwhile. 

In scenario (vi) we illustrate the qualitative change in the spatial structure of the economy as $\phi$ increases for $b>1/2$, but  with a higher $\lambda$, since, with the parameter values of the previous scenario, agglomeration would be ubiquitously stable (and hence uninteresting) for higher values of $b$. In Figure~\ref{bifdiagramscenariobhigh}, we can observe a supercritical pitchfork bifurcation, as in the model by \cite{pfluger2004simple}, where there is no innovation. That is, for low levels of economic integration, symmetric dispersion is stable. As $\phi$ increases, one region smoothly becomes more and more industrialized en route to a full agglomeration whereby that region becomes a core.

Noteworthy, in any scenario for which a break point exists at symmetric dispersion, it can be shown analytically that the model undergoes a pitchfork bifurcation, either supercritical  or subcritical, depending on the parameter values (see Appendix A.6). Additionally, in Figures~\ref{bifdiagramscenarioii} and \ref{bifdiagramscenarioiii} (scenarios (ii) and (iii)), a limit point $\phi \equiv \phi_l \in (\phi_{b2},1)$ is discernible at which two asymmetric equilibria, along a curve tangent to $\phi_l$ that lies to its left, collide and coalesce. This suggests that in both scenarios (i) and (ii) the model undergoes a saddle-node bifurcation at some asymmetric equilibrium $z^{*}\in\left(\frac{1}{2},1\right)$. This kind of bifurcation also appears in the two-region footloose entrepreneur model by \cite{Forslid-Ottaviano-JEG2003} with heterogeneous agents analysed by \cite{Castro_2021} and also in the \cite{pfluger2004simple} model extended to multiple regions by \cite{Gaspar-et-al-ET2018}. This kind of bifurcation seems to be associated with discontinuous jumps between some asymmetric equilibrium other than agglomeration and the symmetric dispersion once $\phi$ rises (falls) above (below) some threshold level.

In Appendix B we change the benchmark parameter values and further illustrate how changes in $b$ can bring about richer implications regarding the qualitative structure of the spatial economy.


 
 
\section{On the role of regional interaction}

\subsection{The impact of related variety: comparative statics}

It is worthwhile investigating analytically how changes in the level of related variety  affect the long-run spatial outcomes in the economy. Following \cite{Castro_2021}, we say that a change in related variety \emph{favours agglomeration} if, due to the change: (a) symmetric dispersion may become unstable but not stable, (b) agglomeration may become stable but not unstable, and (c) asymmetric dispersion becomes more asymmetric.

 Using (\ref{eq:indirect utility final}) and (\ref{eq:utility differential}), we have:
\[
\frac{\partial\Delta v}{\partial b}=\frac{\gamma\mu(2z-1)(\phi+1)\left[\lambda-4z^{2}+\phi(\lambda+4(z-1)z+2)+4z\right]}{2\sigma\left[z(\phi-1)+1\right]\left[z(1-\phi)+\phi\right]},
\]
which is positive for $z\in\left(\frac{1}{2},1\right)$.    We have the following result.

\begin{lem}
An increase in related variety favours agglomeration.
\end{lem}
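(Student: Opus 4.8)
The plan is to verify the three clauses in the definition of ``favours agglomeration'' applied to the parameter $b$: (a) an increase in $b$ cannot turn symmetric dispersion from unstable to stable; (b) an increase in $b$ cannot turn agglomeration from stable to unstable; and (c) along a stable interior (asymmetric) equilibrium branch, $z^{*}$ moves away from $\tfrac12$ as $b$ grows. Clauses (a) and (b) follow immediately by differentiating the explicit stability conditions of Section~4.2 with respect to $b$, while clause (c) follows from the implicit function theorem together with the sign of $\partial\Delta v/\partial b$ displayed just before the statement.

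For (a) I would use that symmetric dispersion is stable iff $\mathcal{B}<0$ in \eqref{eq:stability symmetric dispersion}. Since $\partial\mathcal{B}/\partial b = 2\gamma(\sigma-1)(\lambda+1)(\phi+1)^{2}>0$, the quantity $\mathcal{B}$ is strictly increasing in $b$; hence once it is nonnegative it remains positive, so a larger $b$ can only destabilize symmetric dispersion, never stabilize it. For (b), agglomeration is stable iff $\mathcal{S}>0$, and $\partial\mathcal{S}/\partial b = \dfrac{\gamma\big[(\lambda+2)\phi^{2}+2(\lambda+1)\phi+\lambda\big]}{2\sigma\phi}>0$ because every term in the bracket is positive and $\phi>0$; thus $\mathcal{S}$ is strictly increasing in $b$, so a larger $b$ can only make agglomeration stable, never unstable.

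For (c), let $z^{*}\in(\tfrac12,1)$ be a regular, stable asymmetric equilibrium, so that $\Delta v(z^{*};b)=0$ and $\partial_{z}\Delta v(z^{*};b)<0$ by the stability characterization of Section~4.2.3. The implicit function theorem gives
\[
\frac{dz^{*}}{db}=-\,\frac{\partial_{b}\Delta v(z^{*};b)}{\partial_{z}\Delta v(z^{*};b)},
\]
whose denominator is negative, so it remains to sign $\partial_{b}\Delta v$ on $(\tfrac12,1)$. In the displayed formula the factor $(2z-1)(\phi+1)$ is positive; the denominator $\big[z(\phi-1)+1\big]\big[z(1-\phi)+\phi\big]=\big[1-z(1-\phi)\big]\big[z(1-\phi)+\phi\big]$ is a product of two positive numbers for $z\in(\tfrac12,1)$, $\phi\in(0,1)$; and the remaining bracket rewrites as $\lambda+4z(1-z)+\phi\big(\lambda+2-4z(1-z)\big)$, which is positive since $4z(1-z)\in(0,1)$ there (so $\lambda+2-4z(1-z)>\lambda+1>0$). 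Hence $\partial_{b}\Delta v>0$ and $dz^{*}/db>0$: the stable asymmetric equilibrium becomes more asymmetric. Collecting (a), (b) and (c) yields the Lemma.

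I do not expect a genuine obstacle; the only care needed is bookkeeping in clause (c) — restricting to regular asymmetric equilibria so the implicit function theorem applies and the sign of $\partial_{z}\Delta v(z^{*})$ is pinned down by stability (at a regular \emph{unstable} asymmetric equilibrium one would instead obtain $dz^{*}/db<0$, but such equilibria are not observed and hence irrelevant to clause (c)), and noting that the endpoint $z^{*}\to1$ of such a branch is precisely the agglomeration state already handled in (b). Everything else reduces to short sign checks on expressions already derived.
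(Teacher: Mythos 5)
Your proof is correct, but it takes a genuinely different route from the paper's. The paper establishes the Lemma in two strokes: it computes $\partial\Delta v/\partial b$ explicitly, notes it is positive for $z\in\left(\tfrac{1}{2},1\right)$, and then simply invokes Proposition 9 of Castro et al.\ (2021), a general result for this class of two-region models which says that a parameter raising the utility differential on $z>\tfrac{1}{2}$ favours agglomeration in exactly the three senses (a)--(c) of the definition. You instead verify the three clauses directly inside the model: (a) and (b) from the monotonicity in $b$ of the explicit stability expressions $\mathcal{B}$ (symmetric dispersion) and $\mathcal{S}$ (agglomeration), whose partial derivatives in $b$ you compute and sign correctly, and (c) from the implicit function theorem applied at a regular stable asymmetric equilibrium together with the same sign fact $\partial_b\Delta v>0$, whose bracket you correctly rewrite as $\lambda+4z(1-z)+\phi\left[\lambda+2-4z(1-z)\right]>0$. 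What your approach buys is self-containedness: the reader need not consult the external proposition, and clauses (a)--(b) follow from one-line sign checks on formulas the paper has already displayed. What the paper's approach buys is brevity and the conceptual point that the single condition $\partial_b\Delta v>0$ on $z>\tfrac{1}{2}$ is what drives all three clauses (indeed your clause-(c) argument is essentially how the cited general result is proved). Your care in restricting clause (c) to regular stable asymmetric equilibria, where stability pins down $\partial_z\Delta v(z^*)<0$ so the implicit function theorem applies with the right sign, is the appropriate bookkeeping and matches the paper's regularity conventions; no gap remains.
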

\begin{proof}
See Proposition 9 of \cite[p.197]{Castro_2021}.
\end{proof}

The interpretation behind this result is straightforward: a higher $b$ implies a higher chance of successful innovation in region $i$ when more scientists live in region $i$. Hence, expected profits become higher, which leads to higher wages and, thus, a higher utility differential in region $i$. Therefore, a higher $b$ makes stability of agglomeration (symmetric dispersion) more (less) likely for a given value of $\phi$, and  asymmetric dispersion becomes more asymmetric. 

Suppose $b$ is such that symmetric dispersion is the unique stable equilibrium for a low $\phi$ (Proposition 4), asymmetric dispersion is the unique stable equilibrium for intermediate values of $\phi$ (Proposition 5), and agglomeration is the unique stable equilibrium for a high $\phi$ (Proposition 3). Accordingly, the model undergoes a supercritical pitchfork bifurcation at symmetric dispersion. As illustrated in Figure~9, an increase in $b$ shifts the pitchfork bifurcation leftwards, provided that the set of stable equilibria remains unchanged.


\begin{figure}[h]

\begin{centering}
\includegraphics{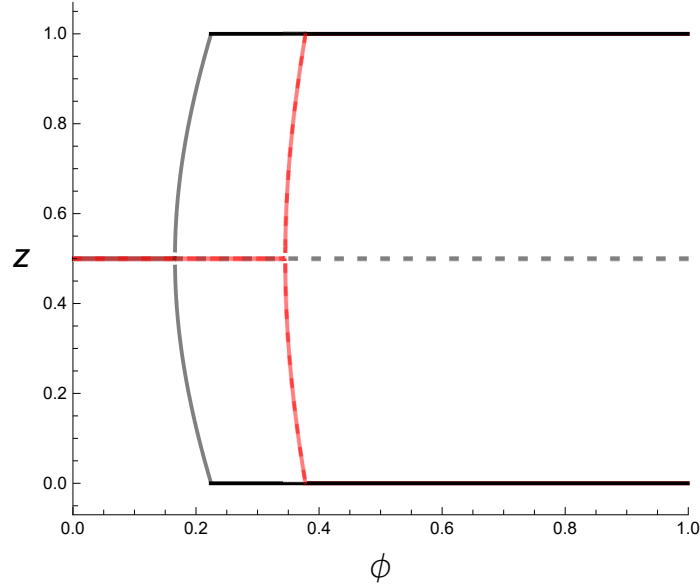}\caption{Bifurcation diagrams: (i) to the left, in black, we set $b=0.9$;
(ii) to the right, in red, we set $b=0.7$. Parameter values are $(\sigma,\gamma,\lambda)=(8,1,4).$}
\par\end{centering}
\end{figure}

\subsection{Related variety and re-dispersion: a general case}

Let us now consider a more general specification for the firms' success
of innovation:
\begin{equation}
\Phi_{i}^{G}(s)=\min\left\{ \frac{g_{i}(z_{i},z_{j};v)\gamma A}{a_{1}(s)},1\right\} ,\label{eq:general}
\end{equation}
with $g_{1}\equiv g(z)$, $g_{2}\equiv g(1-z)$, $g:[0,1]\mapsto[0,1]$
is a $C^{3}$ function of $z$, and $v\in\mathcal{V}$ is a vector
of parameters which includes the related variety parameter $b\in(0,1)$.
Additionally, assume that $g^{\prime}(z)<0$ if $b<1/2$, and $g^{\prime}(z)>0$
if $b>1/2$, for $z\in[\frac{1}{2},1]$. This reflects the fact that $g(z)$ constitutes a dispersion
force when related variety is low, and an agglomeration force when
related variety is high, \emph{at least} when region $1$ is either the same size or larger than region $2$. This less restrictive assumption might be more reasonable in some contexts, e.g., if congestion is assumed to have an impact on knowledge creation (either positive or negative). Clearly, $\Phi_{i}(s)$ in (\ref{eq:prob of succesful innovation-1}) is a particular
case of $\Phi_{i}^G(s)$ in (\ref{eq:general}). The indirect utility
in region $i$ is obtained by replacing the term $bz_{i}+(1-b)z_{j}$
in (\ref{eq:indirect utility final}) with $g_{i}(z)$. 

At symmetric dispersion, making use of the fact that $g_1^{\prime}(\tfrac{1}{2})=-g_2^{\prime}(\frac{1}{2}),$
we have at most two break points:
\begin{align*}
\phi_{b1}^{G} & =\frac{\gamma(\lambda+1)(\sigma-1)\left[g'\left(\frac{1}{2}\right)+2g\left(\frac{1}{2}\right)\right]-\kappa}{2\left[\gamma g\left(\frac{1}{2}\right)(\lambda+2)(\sigma-1)+\sigma\right]-\gamma(\lambda+1)(\sigma-1)g'\left(\frac{1}{2}\right)},\\
\phi_{b2}^{G} & =\frac{\gamma(\lambda+1)(\sigma-1)\left[g'\left(\frac{1}{2}\right)+2g\left(\frac{1}{2}\right)\right]+\kappa}{2\left[\gamma g\left(\frac{1}{2}\right)(\lambda+2)(\sigma-1)+\sigma\right]-\gamma(\lambda+1)(\sigma-1)g'\left(\frac{1}{2}\right)},
\end{align*}
where 
\[
\kappa=2\sqrt{2\gamma^{2}g\left(\frac{1}{2}\right)(\lambda+1)^{2}(\sigma-1)^{2}g'\left(\frac{1}{2}\right)+\left[\gamma g\left(\frac{1}{2}\right)(\sigma-1)+\sigma\right]^{2}}.
\]
Since $g(\frac{1}{2})>0,$ it can be shown that $\phi_{b2}^{G}\notin(0,1)$
if $g^{\prime}(\frac{1}{2})>0,$ i.e., if $b>1/2$. Since symmetric
dispersion is stable for $\phi\in(0,\phi_{b1}^{G})\cup(\phi_{b2}^{G}),$
we can thus conclude that the complete re-dispersion of economic activities is
not possible if related variety is high.  This leads to the following result.
\begin{prop}
Suppose region $i$ is the same size or larger than region $j$. If intra-regional interaction is relatively more important compared
to inter-regional interaction for the success of innovation in region $i$, a complete re-dispersion
of economic activities for high economic integration is not possible.
\end{prop}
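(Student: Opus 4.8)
The plan is to recast ``complete re-dispersion at high economic integration is impossible'' as the statement that the symmetric pattern $z^{*}=\tfrac12$ fails to be stable for $\phi$ in any left-neighbourhood of $1$, and then to settle this by a one-line evaluation, at the endpoint $\phi=1$, of the general analogue of the stability polynomial $\mathcal{B}$ in (\ref{eq:stability symmetric dispersion}). Throughout I read ``intra-regional interaction relatively more important for innovation in region $i$'' as the hypothesis imposed in Section~6.2, namely $g'(z)>0$ on $[\tfrac12,1]$ (equivalently $b>\tfrac12$), so in particular $g'(\tfrac12)>0$; and ``complete re-dispersion for high integration'' as the existence of some $\phi_{0}<1$ with $z^{*}=\tfrac12$ stable on $(\phi_{0},1)$.

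First I would compute the stability condition of $z^{*}=\tfrac12$ for the general kernel. Substituting $g_i(z)$ for $bz_i+(1-b)z_j$ in (\ref{eq:indirect utility final}) and writing $\psi(z)$ for the market-access bracket $\tfrac{\lambda/2+z}{z+\phi(1-z)}+\phi\tfrac{\lambda/2+1-z}{\phi z+1-z}$, one has $\Delta v(z)=\tfrac{\mu\gamma}{\sigma}\big[g(z)\psi(z)-g(1-z)\psi(1-z)\big]+\tfrac{\mu}{\sigma-1}\ln\tfrac{z+\phi(1-z)}{1-z+\phi z}$. Differentiating at $z=\tfrac12$ and using $g_1'(\tfrac12)=-g_2'(\tfrac12)$ together with $\psi(\tfrac12)=\lambda+1$ and $\psi'(\tfrac12)=\dfrac{2\big[(\lambda+2)\phi-\lambda\big](1-\phi)}{(1+\phi)^{2}}$, multiplying $\tfrac{d\Delta v}{dz}(\tfrac12)$ by the positive factor $\tfrac{\sigma(\sigma-1)(1+\phi)^{2}}{2\mu}$ yields the quadratic
\[
\mathcal{B}^{G}(\phi)=\gamma(\sigma-1)\Big[(\lambda+1)\,g'\!\big(\tfrac12\big)(1+\phi)^{2}+2\,g\!\big(\tfrac12\big)\big((\lambda+2)\phi-\lambda\big)(1-\phi)\Big]+2\sigma(1-\phi^{2}),
\]
with $z^{*}=\tfrac12$ stable iff $\mathcal{B}^{G}(\phi)<0$. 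One checks that $\mathcal{B}^{G}$ collapses to $\mathcal{B}$ in (\ref{eq:stability symmetric dispersion}) for $g(z)=bz+(1-b)(1-z)$, and that its (at most two) real roots are exactly the break points $\phi_{b1}^{G}$ and $\phi_{b2}^{G}$ introduced above.

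The decisive step is then the evaluation at $\phi=1$: the factors $(1-\phi)$ and $(1-\phi^{2})$ vanish and $(1+\phi)^{2}=4$, so
\[
\mathcal{B}^{G}(1)=4\,\gamma(\sigma-1)(\lambda+1)\,g'\!\big(\tfrac12\big).
\]
Since $g'(\tfrac12)>0$ under the hypothesis, this is strictly positive, whence by continuity $\mathcal{B}^{G}(\phi)>0$ on some interval $(\phi_{0},1)$; thus $z^{*}=\tfrac12$ is unstable there, which already contradicts the existence of a high-integration re-dispersion phase. Equivalently, because $\mathcal{B}^{G}$ is a downward-opening parabola in $\phi$ --- its $\phi^{2}$-coefficient $\gamma(\sigma-1)\big[(\lambda+1)g'(\tfrac12)-2(\lambda+2)g(\tfrac12)\big]-2\sigma$ is negative, using $g(\tfrac12)>0$ --- and is positive at $\phi=1$, its larger root satisfies $\phi_{b2}^{G}>1$, i.e.\ $\phi_{b2}^{G}\notin(0,1)$; hence the only break point inside $(0,1)$ is $\phi_{b1}^{G}$, symmetric dispersion loses stability once as $\phi$ crosses $\phi_{b1}^{G}$, and cannot regain it before full integration.

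I expect the only genuine work to be the first step --- redoing the symmetric-dispersion stability computation with an unspecified $C^{3}$ kernel $g$ in place of the linear one, in particular obtaining $\psi'(\tfrac12)$ correctly and verifying that the sign structure of $\mathcal{B}^{G}$ (negative leading coefficient, so ``stable outside $[\phi_{b1}^{G},\phi_{b2}^{G}]$'') mirrors that of $\mathcal{B}$. If one prefers to bypass even the sign discussion of the leading coefficient, the continuity argument resting solely on $\mathcal{B}^{G}(1)=4\gamma(\sigma-1)(\lambda+1)g'(\tfrac12)>0$ is self-contained and already delivers the Proposition.
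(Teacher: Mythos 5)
Your proposal is correct, and it rests on the same underlying object as the paper: the stability condition of symmetric dispersion under the general kernel $g$, i.e.\ the sign of the quadratic in $\phi$ obtained from $\tfrac{d\Delta v}{dz}(\tfrac12)$. Your $\mathcal{B}^{G}$ is right (it does collapse to $\mathcal{B}$ in (\ref{eq:stability symmetric dispersion}) for the linear kernel, and your values $\psi(\tfrac12)=\lambda+1$, $\psi'(\tfrac12)=\tfrac{2[(\lambda+2)\phi-\lambda](1-\phi)}{(1+\phi)^{2}}$ check out), so $\mathcal{B}^{G}(1)=4\gamma(\sigma-1)(\lambda+1)g'(\tfrac12)>0$ and the continuity argument immediately precludes stability of $z^{*}=\tfrac12$ near $\phi=1$. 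Where you differ from the paper is only in how this sign information is extracted: the paper writes down the explicit break points $\phi_{b1}^{G},\phi_{b2}^{G}$ (with the discriminant $\kappa$) and then asserts, without detail, that $\phi_{b2}^{G}\notin(0,1)$ when $g'(\tfrac12)>0$; your endpoint evaluation at $\phi=1$ is a more elementary and self-contained way of delivering exactly that claim, and it avoids manipulating the root formulas altogether. One caution: your ``equivalent'' parabola argument asserts that the $\phi^{2}$-coefficient $\gamma(\sigma-1)\bigl[(\lambda+1)g'(\tfrac12)-2(\lambda+2)g(\tfrac12)\bigr]-2\sigma$ is negative ``using $g(\tfrac12)>0$,'' but this does not follow for an arbitrary admissible $g$ (nothing bounds $g'(\tfrac12)$, which can be large even though $g$ maps $[0,1]$ into $[0,1]$), so the downward-parabola/root-location reformulation is not justified as stated; since you explicitly flag it as dispensable and the continuity argument alone proves the Proposition, this is a blemish in the optional remark rather than a gap in the proof.
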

In other words, the process
of (complete) re-dispersion depends on the dispersive or agglomerative nature
of knowledge spillovers at symmetric dispersion.

We take one step further by investigating what type of bifurcation
the symmetric dispersion undergoes at some break point $\phi_{b}\in\left\{ \phi_{b1}^{G},\phi_{b2}^{G}\right\} $.
We have the following derivatives:
\[
\dfrac{\partial f}{\partial z}\left(\dfrac{1}{2};\phi_{b}\right)=0;\ \dfrac{\partial^{2}f}{\partial z^{2}}\left(\dfrac{1}{2};\phi_{b}\right)=0;\ \dfrac{\partial f}{\partial\phi}\left(\dfrac{1}{2};\phi_{b}\right)=0;
\]
Furthermore, we have:
\[
\dfrac{\partial^{2}f}{\partial\phi\partial z}\left(\dfrac{1}{2};\phi_{b}\right)=\frac{8\mu}{(\phi+1)^{3}}\left\{ \frac{\gamma g\left(\frac{1}{2}\right)\left[-2\lambda(\phi_{b}-1)-3\phi_{b}+1\right]}{\sigma}+\frac{\phi_{b}+1}{1-\sigma}\right\} ,
\]
which is zero if and only if $g(\frac{1}{2})=g_{b},$ with
\[
g_{b}\equiv-\frac{\sigma(\phi_{b}+1)}{\gamma(\sigma-1)\left[2\lambda(\phi_{b}-1)+3\phi_{b}-1\right]}.
\]
If $g(\frac{1}{2})\neq g_{b},$ then symmetric dispersion undergoes
a pitchfork bifurcation (see Appendix A.6). In this case, its criticality is determined
by the sign of the third-order derivative $\frac{\partial^{3}f}{\partial z^{3}}(\frac{1}{2};\phi_{b}).$ However, without additional details on the shape of $g(z)$ it is
impossible to convey further information on these conditions and to
relate them with $b$.

For the purpose of illustration, Appendix C depicts an alternative scenario compared to our benchmark case, whereby the regional interaction between scientists determining the success of innovation is of multiplicative nature, but still satisfies all the conditions of the general form of $g(z)$ presented in this Section. It  corroborates our general findings regarding the relation between related variety and the stability of symmetric dispersion for both very high and very low values of the trade freeness.  However, the transition between different long-run spatial distributions need not be the same. 

We conclude that, while the presence of a (complete) re-dispersion phase should depend solely on the dispersive or agglomerative nature of technological spillovers, the shift between distinct spatial outcomes due to economic integration may display differences across various functional forms describing the likelihood of a successful innovation.

 \section{Concluding remarks}
 In this paper, we have analyzed a two-region model with vertical innovations that
enhance the quality of varieties of   horizontally differentiated
manufactures produced in each of the two regions. We
looked at how the spatial creation and diffusion of knowledge and increasing
returns in manufacturing interact to shape the spatial economy. Knowledge levels translate in a firms' capacity to innovate. The chance of successful innovations depends on the spatial distribution of mobile agents in the economy, i.e. on the intra- and inter-regional
interaction between researchers (mobile workers).   

We find that, if  the weight of interaction with foreign scientists is relatively more important for the success of innovation, the model accounts for re-dispersion of economic activities after an initial stage of progressive agglomeration as trade integration increases from a low level. However, the relationship between economic integration and spatial imbalances is far from trivial, as we have shown a myriad of different qualitative possibilities regarding transitions between different stable states that depend on the weight of intra-regional interaction between scientists, i.e. on the level of related variety \citep{Frenken-etal-RS2007}. We show that the re-dispersion process is only smooth \citep{fujitathisse2013} when related variety is low. If the related variety is intermediate, there is a discontinuous jump toward symmetric dispersion from either full agglomeration or partial agglomeration. If related variety is  high, re-dispersion is precluded and full agglomeration is the only stable outcome for high enough economic integration, as in most early ``new economic geography'' models \citep{Fujita-Krugman-Venables-Book1999,BaldwinForslidMartinOttavianoRobertNicoud+2003,pfluger2004simple}.

  A note on the functional form for the chance of successful innovations is warranted. If it preserves the idea that congestion dampens innovation, then it introduces a dispersion force, since researchers have incentives to relocate to smaller regions and benefit from the sizable pool of agents living in the largest region. In this case, re-dispersion after an initial phase of agglomeration is possible. In Appendix C, we provide insights on a multiplicative scenario (as in e.g.  \citealp{Berliant-Fujita-RSUE2012}). Both the additive  and multiplicative functional forms  presented here share the main idea that public knowledge transfers imperfectly across space, as argued by e.g. \citet{krugman1991geography}, or \citet{audretsch2004knowledge}, and supported by the empirical evidence of  \citet{audretsch1996r}. However, the additive case bears richer and more plausible implications, while preserving enough analytical tractability. Nonetheless, it could be worthwhile to investigate further how different types of regional interaction determining innovation success affect spatial outcomes in the economy. Although the existence of a re-dispersion phase should hinge solely on the dispersive/agglomerative nature of technological spillovers, the transition between different spatial outcomes as a consequence of economic integration may exhibit variations across different functional forms for the probability of a successful innovation.

  Finally, we reiterate that  the modeling of the innovation sector in this paper purposefully abstracts from the dynamic processes that usually drive innovation and the creation and diffusion of knowledge. Such abstraction is useful because it allows the outcomes of the model to be entirely attributed to the spatial mechanism of knowledge spillovers (as proposed by \citet{BS2022}). However, we argue that this mechanism could be extended to models of Schumpeterian growth, where the explicit modelling of innovation dynamics would allow to infer about an eventual circular causality between regional growth and agglomeration patterns \citep{baldwin2004agglomeration}.   
 

\clearpage{}

\appendix

\section{Proofs}

This appendix contains the more cumbersome formal proofs that support
our main results.

\subsection{Proof of proposition 1}

Differentiating $\Delta v(z)$ in (\ref{eq:utility differential})
yields:
\begin{equation}
\frac{d\Delta v}{dz}(z)=\frac{\mu P(z)}{2(\sigma-1)\sigma\left[z(\phi-1)+1\right]{}^{2}\left[z(1-\phi)+\phi\right]{}^{2}},\label{prop 1 eq}
\end{equation}
where:
\[
P(z)=a_{1}z^{4}+a_{2}bz^{3}-2(1-\phi)a_{3}z^{2}+2(1-\phi)a_{4}z+a_{5},
\]
with:{\small{}
\begin{align*}a_{1}= & 4(1-2b)\gamma\mu(\sigma-1)(\phi-1)^{3}(\phi+1)\\
a_{2}= & 8(2b-1)\gamma\mu(\sigma-1)(\phi-1)^{3}(\phi+1)\\
a_{3}= & \gamma(\sigma-1)\left\{ b(\phi+1)\left[(\lambda-2)\phi^{2}-\lambda+18\phi-4\right]-\phi\left[\lambda(\phi-1)\phi+\lambda+6\phi\right]+\lambda-8\phi+2\right\} \\
 & +\sigma(\phi+1)(\phi-1)^{2}\\
a_{4}= & \gamma(\sigma-1)\left\{ \lambda(\phi-1)\left[b(\phi+1)^{2}-\phi^{2}-1\right]+2\phi\left[b(\phi+1)(\phi+5)-\phi(\phi+2)-3\right]\right\} \\
 & +\sigma(\phi+1)(\phi-1)^{2}\\
a_{5}= & \gamma(\sigma-1)\left\{ \lambda\left(\phi^{2}+1\right)\left[b(\phi+1)^{2}-\phi^{2}-1\right]+2\phi\left[b\left(\phi^{3}+3\phi^{2}+\phi-1\right)-\phi\left(\phi^{2}+\phi+1\right)+1\right]\right\} \\
 & -2\sigma\phi\left(\phi^{2}-1\right).
\end{align*}
}{\small\par}

\noindent The denominator of (\ref{prop 1 eq}) is positive, which
means that the sign of $\frac{d\Delta v}{dz}(z)$ is given by the
sign of $P(z)$, which is a fourth degree polynomial in $z$. Therefore,
$\Delta v(z)$ has at most four turning points, and thus at most
five equilibria for $z\in[0,1]$. We know that $z=\frac{1}{2}$ is
an invariant pattern. By symmetry, we can establish that there exist
at most two equilibria for $z>\tfrac{1}{2}$, which concludes the
proof.\hfill{}$\square$

\subsection{Proof of Proposition 2}

Proceeding in a familiar fashion as in \citet{Gaspar-et-al-ET2018,Gaspar-et-al-RSUE2021}, the equilibrium condition $\Delta v(z)=0$ yields:
\begin{equation}
\lambda\equiv\lambda^{*}(z)=-2\frac{b_{1}b_{2}+b_{3}\ln\left[\frac{z(\phi-1)+1}{z(1-\phi)+\phi)}\right]}{b_{4}},\label{eq:equilibriumcondition}
\end{equation}
where:
\begin{align*}
b_{1} & =\gamma(\sigma-1)(2z-1)\\
b_{2} & =\phi^{2}\left[2b(z-1)z+b-z^{2}+z-1\right]+(1-2b)(z-1)z+b\phi\\
b_{3} & =\sigma\left[z(\phi-1)+1\right]\left[z(\phi-1)-\phi\right]\\
b_{4} & =\gamma(\sigma-1)(2z-1)\left[b(\phi+1)^{2}-\phi^{2}-1\right].
\end{align*}
It is easy to note that $\lambda^{*}(z)$ has a vertical asymptote
if and only if $b_{4}=0$, i.e., iff:
\[
b=\hat{b}\equiv\frac{\phi^{2}+1}{(\phi+1)^{2}}.
\] 

\noindent For $z\in\left(\frac{1}{2},1\right],$ the log term of $\lambda^{*}(z)$
is negative, as is $b_{3}$. Next, we have $b_{1}>0$, and $b_{2}>0$
if:
\[
b \geq \underline{b}\equiv\frac{(z-1)z\left(\phi^{2}-1\right)+\phi^{2}}{2(z-1)z\left(\phi^{2}-1\right)+\phi(\phi+1)},
\]
where $0<\underline{b}<\hat{b}$. Since $b_{4}<0$ only if $b<\hat{b}$,
we have $\lambda^{*}(z)>0$ if $b\in\left[\underline{b},\hat{b}\right)$
and $\lambda^{*}(z)<0$ if $b\in\left(\hat{b},1\right)$. For $b\in\left(0,\underline{b}\right)$,
we need further inspection.

We have that:{\footnotesize{}
\[
\frac{\partial\lambda^{*}}{\partial b}(z)=\frac{2(\phi+1)\left[z(\phi-1)+1\right]\left[z(\phi-1)-\phi\right]\left\{ \gamma(\sigma-1)(2z-1)(\phi-1)+\sigma(\phi+1)\ln\left[\frac{z(\phi-1)+1}{z(1-\phi)+\phi)}\right]\right\} }{\gamma(\sigma-1)(2z-1)\left[-b(\phi+1)^{2}+\phi^{2}+1\right]^{2}},
\]
}which is positive for all $z\in\left(\frac{1}{2},1\right].$ The unique zero  of $\lambda^{*}(z)$ in terms of $b$ is given by:{\footnotesize{}
\[
b=\tilde{b}\equiv\frac{\gamma(\sigma-1)(2z-1)\left[(z-1)z\left(\phi^{2}-1\right)+\phi^{2}\right]-\sigma\left[z(\phi-1)+1\right]\left[z(\phi-1)-\phi\right]\ln\left[\frac{z(\phi-1)+1}{z(1-\phi)+\phi)}\right]}{\gamma(\sigma-1)(2z-1)(\phi+1)\left[2(z-1)z(\phi-1)+\phi\right]},
\]
}
with $\tilde{b} < \underline{b}$ and $\lambda^*(z)>0$ for $b\in (\tilde{b},\hat{b})$. It is possible to show that $\tilde{b}$ is increasing in $\gamma.$
Moreover, we have $\tilde{b}=0$ if and only if:
\[
\gamma=\gamma_{c}\equiv\frac{\sigma\left[z(1-\phi)-1\right]\left[z(1-\phi)+\phi\right]\ln\left[\frac{z(\phi-1)+1}{z(1-\phi)+\phi}\right]}{(\sigma-1)(2z-1)\left[(z-1)z\left(\phi^{2}-1\right)+\phi^{2}\right]}>0.
\]
This means that $\tilde{b}\geq0$ if  $\gamma\geq\gamma_{c}$
and $\tilde{b}<0$ if $\gamma<\gamma_{c}$ and $\gamma_{c}\in(0,1]$.
Since $\gamma_{c}\in(0,+\infty)$, we have $\tilde{b}<0$ if $\gamma_{c}>1$.
As a result, we have $\tilde{b}<0$ if $\gamma\in\left(0,\min\{1,\gamma_{c}\}\right)$ and $\tilde{b}\geq 0$ if $\gamma\in\left[\min\{1,\gamma_{c}\},1\right)$.
Then $\lambda^{*}(z)>0$ if $\gamma\in\left(0,\min\{1,\gamma_{c}\}\right)$
and $b\in(0,\hat{b})$. Otherwise, we have $\lambda^{*}(z)>0$ if
$\gamma\in\left[\min\{1,\gamma_{c}\},1\right)$ and $b\in(\tilde{b},\hat{b}).$ Therefore, $\lambda^{*}(z)$
is positive for $b\in\left(\max\left\{ 0,\tilde{b}\right\} ,\hat{b}\right)$
and negative for $b\in\ \left(0,\max\left\{ 0,\tilde{b}\right\}\right)\cup\left(\hat{b},1\right)$, where $\max\left\{ 0,\tilde{b}\right\}$ depends on $\gamma_c$ and on the value of $\gamma$ as described above.

Thus, we can
assert that, if $b\in\left(\max\left\{ 0,\tilde{b}\right\} ,\hat{b}\right)$, there exists a value of $\lambda>0$ such that at least one (at most two) dispersion
equilibrium $z\equiv z^{*}\in\left(\frac{1}{2},1\right]$ exists. This concludes the proof.\hfill{}$\square$

\subsection{Proof of Proposition 3}

We have:
\[
\lim_{\phi\rightarrow0^{+}}\mathcal{S}(\phi)=-\infty\text{ and }\text{\ensuremath{\mathcal{S}(1)=\frac{\gamma(2b-1)(\lambda+1)}{\sigma}.}}
\]
Therefore, $S(1)>0$ if $b>\tfrac{1}{2}$ and we conclude that $\mathcal{S}(\phi)$
has at least one zero for $\phi\in\left(0,1\right)$. Further, we
have:
\[
\dfrac{d\mathcal{S}}{d\phi}(\phi)=\dfrac{1}{2\phi^{2}}\left\{ \frac{\gamma(b-1)\left[\lambda\left(\phi^{2}-1\right)+2\phi^{2}\right]}{\sigma}-\frac{2\phi}{\sigma-1}\right\},
\]
whose sign depends on that of the second term, which is a second degree
polynomial and thus has at most two zeros $\left\{ \phi^{-},\phi^{+}\right\} $,
with $\phi^{+}>\phi^{-}$. However, only $\phi^{+}$ lies on the interval
$\phi\in\left(0,1\right)$:
\[
\phi^{+}=\frac{\sigma\left[\frac{1}{\sigma-1}-\sqrt{\frac{\gamma^{2}(b-1)^{2}\lambda(\lambda+2)}{\sigma^{2}}+\frac{1}{(\sigma-1)^{2}}}\right]}{\gamma(b-1)(\lambda+2)}.
\]
Given that the leading coefficient of the polynomial is negative,
we have that $\mathcal{S}(\phi)$ is increasing for $\phi\in\left(0,\phi^{+}\right)$
and decreasing for $\phi\in\left(\phi^{+},1\right)$. Thus, $\mathcal{S}(\phi)$
has at most two zeros for $\phi\in\left(0,1\right)$, called sustain
points $\phi_{s1}$ and $\phi_{s2}$ (with $\phi_{s1}<\phi_{s2}$).\footnote{One of which is given by $\phi=1$ if $b=\frac{1}{2}$.}
If $b<\tfrac{1}{2}$, there exist at most two sustain points $\phi_{s1}\in\left(0,1\right)$
and $\phi_{s2}\in\left(0,1\right)$ and we have $\mathcal{S}(\phi)<0$
for $\phi\in\left\{ \left(0,\phi_{1s}\right)\cup\left(\phi_{2s},1\right)\right\} $
and $\mathcal{S}(\phi)>0$ for $\phi\in(\phi_{1s},\phi_{2s})$. If
$b>\tfrac{1}{2}$, there exists one unique sustain point $\phi_{s1}\in\left(0,1\right)$
and we have $\mathcal{S}(\phi)<0$ for $\phi\in\left(0,\phi_{1s}\right)$
and $\mathcal{S}(\phi)>0$ for $\phi\in\left(\phi_{1s},1\right)$,
which concludes the proof.\hfill{}$\square$

\subsection{Symmetric dispersion}

Using (\ref{eq:stability symmetric dispersion}), the breakpoints are given by:
{\small{}
\begin{align}
\phi_{b1} & =\frac{\sqrt{\gamma^{2}(\sigma-1)^{2}\left[8b(\lambda+1)^{2}-4\lambda^{2}-8\lambda-3\right]+4\gamma\sigma(\sigma-1)+4\sigma^{2}}-2b\gamma(\lambda+1)(\sigma-1)}{\gamma(\sigma-1)\left[2b(\lambda+1)-2\lambda-3\right]-2\sigma}\nonumber \\
\phi_{b2} & =-\frac{\sqrt{\gamma^{2}(\sigma-1)^{2}\left[8b(\lambda+1)^{2}-4\lambda^{2}-8\lambda-3\right]+4\gamma\sigma(\sigma-1)+4\sigma^{2}}+2b\gamma(\lambda+1)(\sigma-1)}{\gamma(\sigma-1)\left[2b(\lambda+1)-2\lambda-3\right]-2\sigma}.\label{eq:break points}
\end{align}
}{\small\par}

\noindent The break point $\phi_{b1}$ lies in the interval $(0,1)$ if and
only if: 

\begin{enumerate}[(i).]
\item  $\gamma\in\left(\dfrac{2 \sigma }{(2 \lambda +1) (\sigma -1)},1\right)$,
\item  $b\in\left[b_{1},b_{2}\right),$
\end{enumerate}
\noindent where:
\begin{align*}
b_{1} & =\frac{\left[\gamma(2\lambda+1)(\sigma-1)-2\sigma\right]\left[\gamma(2\lambda+3)(\sigma-1)+2\sigma\right]}{8\gamma^{2}(\lambda+1)^{2}(\sigma-1)^{2}},\\
b_{2} & =\frac{\gamma(2\lambda+1)(\sigma-1)-2\sigma}{2\gamma(\lambda+1)(\sigma-1)}.
\end{align*}

\noindent We have that $b_{1}\in\left(0,\frac{1}{2}\right)$ and $b_{2}\in\left(b_{1},1\right)$. If, additionally, $b<\frac{1}{2}$, then we have also that $\phi_{b2}\in\left(0,1\right)$.\footnote{If  $\gamma < \frac{2 \sigma }{\lambda  (\sigma -1)}$, then $b_{2}\in\left(b_{1},\frac{1}{2}\right)$ and the condition is trivially met by (ii). In this case, both break points exist.}  This means that the possibility of complete re-dispersion following agglomeration
as $\phi$ increases requires that related variety is neither too
high nor too low. However, if $b>\frac{1}{2}$, then $\phi_{b2}$ does not exist and we have a single break point $\phi_{b1}$ if conditions (i) and (ii) are satisfied.

\subsection{Proof of Proposition 4}

Taking the derivative of $\mathcal{G}$ in (\ref{eq:stabasdisp}) with respect to
$b$ we get:
\[
\frac{\partial\mathcal{G}}{\partial b}=-2\gamma(\sigma-1)(2z-1)^{3}\left(1-\phi^{2}\right)<0,\ \ z\in\left(\frac{1}{2},1\right)
\]
Next, solving $\mathcal{G}$ in (\ref{eq:stabasdisp}) for $b$ yields:{\footnotesize{}
\begin{equation}
b=b_{c}\equiv\frac{(2z-1)\left(1-\phi^{2}\right)\left[\sigma+\gamma(\sigma-1)(1-2z)^{2}\right]-\sigma\left[2z^{2}(\phi-1)^{2}-2z(\phi-1)^{2}+\phi^{2}+1\right]\ln\left[\frac{z(\phi-1)+1}{z(1-\phi)+\phi}\right]}{2\gamma(\sigma-1)(2z-1)^{3}\left(1-\phi^{2}\right)}.\label{eq:bcritconda}
\end{equation}
}{\footnotesize\par}

\noindent As a result, we have $\mathcal{G}>0$ for $b<b_c$ and $\mathcal{G}<0$ for $b>b_c$. Next, we will prove that $b_c<1/2$. 

First, notice that $\lim_{\phi\rightarrow1}b_{c}=\frac{1}{2}.$ Next,
we have:
\[
\dfrac{\partial b_{c}}{\partial\phi}=\frac{\sigma\mathcal{N}}{2\gamma(\sigma-1)(2z-1)^{3}\left(\phi^{2}-1\right)^{2}\left[z(\phi-1)+1\right]\left[z(\phi-1)-\phi\right]},
\]
where:
\begin{align*}
\mathcal{N}= & (2z-1)\left(\phi^{2}-1\right)\left[2z^{2}(\phi-1)^{2}-2z(\phi-1)^{2}+\phi^{2}+1\right]-\\
 & -4\left[z(\phi-1)+1\right]{}^{2}\left[z(1-\phi)+\phi\right]{}^{2}\ln\left[\frac{z(\phi-1)+1}{z(1-\phi)+\phi}\right].
\end{align*}
The numerator of the derivative is negative. As for $\mathcal{N}$,
observe that:
\begin{align*}
\dfrac{\partial\mathcal{N}}{\partial z}= & -4\sigma(2z-1)(\phi-1)^{2}\times\\
 & \times\left\{ (2z-1)\left(1-\phi^{2}\right)+2\left[z(\phi-1)+1\right]\left[z(\phi-1)-\phi\right]\ln\left[\frac{z(\phi-1)+1}{z(1-\phi)+\phi}\right]\right\} .
\end{align*}
The first term inside the curly brackets is positive and the second
one is negative as is the log term. Therefore, we have $\frac{\partial\mathcal{N}}{\partial z}<0$.
Since $\mathcal{N}\left(z=\frac{1}{2}\right)=0$ and given that $\mathcal{N}$
is continuous in $z$, we can conclude that $\mathcal{N}<0$ for $z\in\left(\frac{1}{2},1\right)$.
Thus, we have $\frac{\partial b_{c}}{\partial\phi}>0,$ which means
that $b_{c}<\frac{1}{2}$. Thus, if $b>\frac{1}{2}$, we have $\mathcal{G}<0$ for any value of $\lambda$ such that $\lambda^*(z)>0$. This concludes the proof.

\subsection{Bifurcation at symmetric dispersion}

\noindent We can get a better picture of the dynamic properties of the model
by studying the type of local bifurcation that the symmetric equilibrium
undergoes at some break-point $\phi=\phi_{b}.$ After some tedious calculations,
it is possible to show the following:
\[
\dfrac{\partial f}{\partial z}\left(\dfrac{1}{2};\phi_{b}\right)=0;\ \dfrac{\partial^{2}f}{\partial z^{2}}\left(\dfrac{1}{2};\phi_{b}\right)=0;\ \dfrac{\partial f}{\partial\phi}\left(\dfrac{1}{2};\phi_{b}\right)=0;\ \dfrac{\partial^{2}f}{\partial\phi\partial z}\left(\dfrac{1}{2};\phi_{b}\right)>0,
\]
where $\phi_b\in\{\phi_{b1},\phi_{b2}\}$.
According to \citeauthor{Guckenheimer2002} (2002, pp.~150), the conditions
above ensure that symmetric dispersion undergoes a pitchfork
bifurcation at $\phi=\phi_{b}$. Further, we have{\small{}
\[
\dfrac{\partial^{3}f}{\partial z^{3}}\left(\dfrac{1}{2};\phi_{b}\right)=-\frac{32\mu(1-\phi)}{(\sigma-1)\sigma(\phi+1)^{4}}\xi,
\]
}where
\[
\xi=3\gamma(\sigma-1)\left[b(\phi+1)^{2}-\phi^{2}-1\right]\left[\lambda(\phi-1)+2\phi\right]-\sigma(\phi-1)^{2}(\phi+1).
\]
If $\xi>0$, the derivative is negative, and thus the pitchfork is supercritical and a curve of stable asymmetric equilibria branches from symmetric dispersion to its right. If $\xi<0$, the derivative is positive and hence the pitchfork is subcritical and a curve of unstable asymmetric equilibria branches from symmetric dispersion to its left. If $\xi=0$ we say that the pitchfork is degenerate.

\section{Related variety and economic integration: additional illustrations}

In this Appendix we change the benchmark parameter values to find out whether the value of $b$ can bear more drastic implications to the qualitative structure of the spatial economy. Particularly, we set $(\lambda,\gamma,\sigma)=(1,1,8)$.

\begin{figure}[!h]
     \centering
     \begin{subfigure}{0.4\textwidth}
         \centering
         \includegraphics[width=\textwidth]{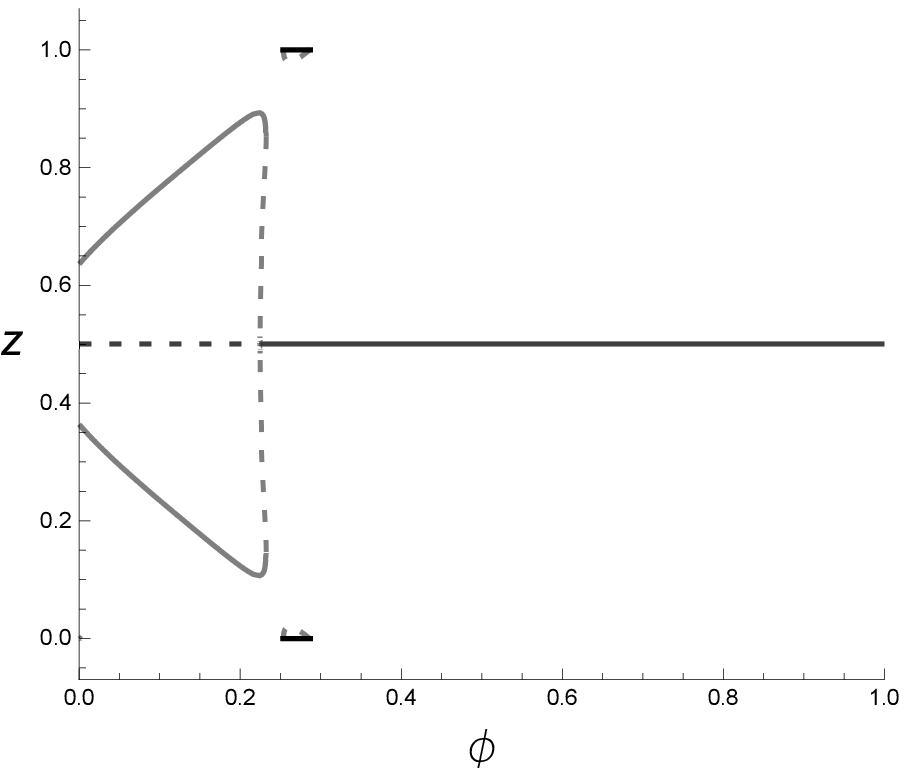}
     \end{subfigure}
     \hfill
     \begin{subfigure}{0.4\textwidth}
         \centering
         \includegraphics[width=\textwidth]{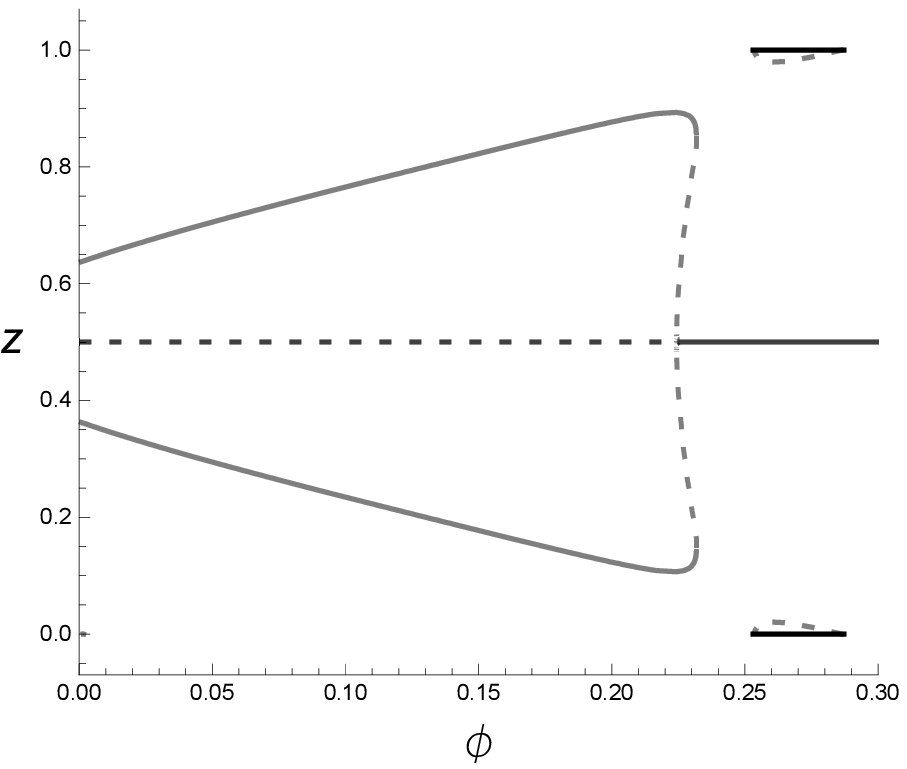}
     \end{subfigure}
     \caption{Bifurcation diagram for $(\lambda,\gamma,\sigma,b)=(1,0.9,8,0.1405)$. Filled lines correspond to stable equilbria and dashed lines correspond to unstable equilibria.}
     \label{bifdiagramscenariostrange1}
\end{figure}

\begin{figure}[!h]
     \centering
     \begin{subfigure}{0.4\textwidth}
         \centering
         \includegraphics[width=\textwidth]{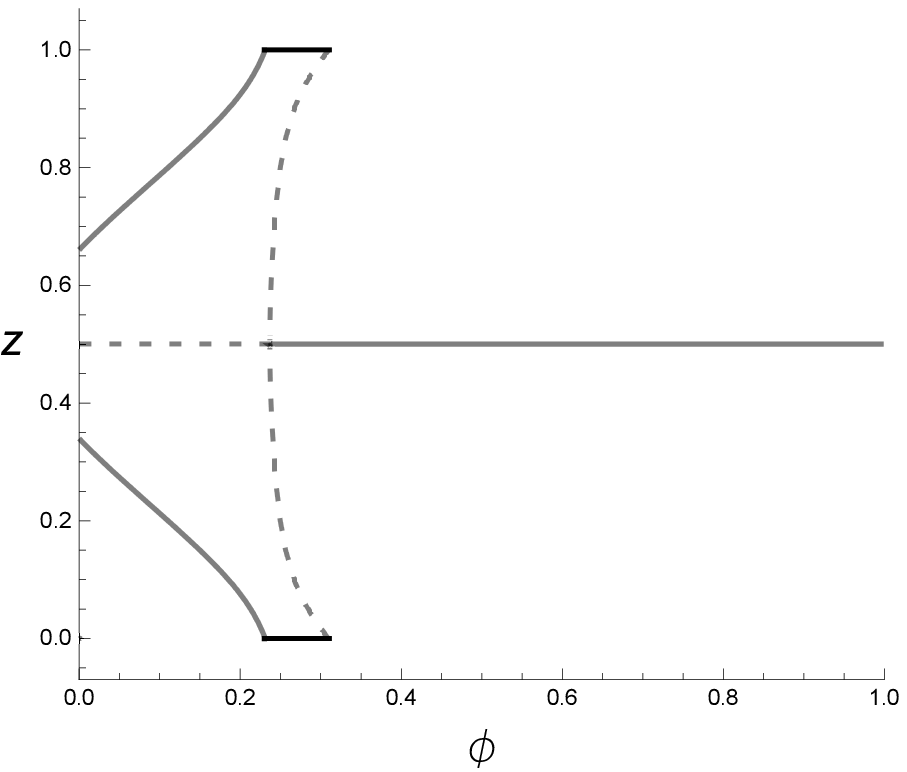}
     \end{subfigure}
     \hfill
     \begin{subfigure}{0.4\textwidth}
         \centering
         \includegraphics[width=\textwidth]{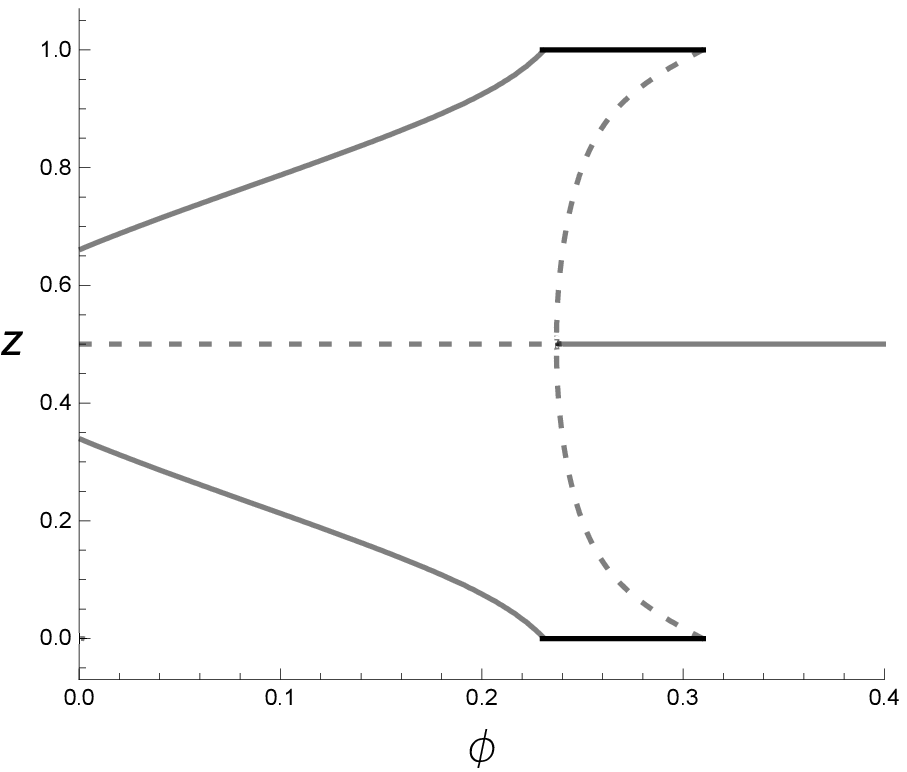}
     \end{subfigure}
     \caption{Bifurcation diagram for $(\lambda,\gamma,\sigma,b)=(1,0.9,8,0.15)$. Filled lines correspond to stable equilbria and dashed lines correspond to unstable equilibria.}
     \label{bifdiagramscenariostrange2}
\end{figure}

In Figure~\ref{bifdiagramscenariostrange1} we set $b=0.146$, which is considerably lower than in the previously illustrated cases. We can observe that the re-dispersion process is quite different. Here, for very low levels of $\phi$ an asymmetric dispersion equilibrium exists and is stable, and becomes more asymmetric as $\phi$ increases. However, after a certain point, the economy starts to re-disperse until $\phi$ finds a limit point $\phi_l\in(\phi_{b2},1)$ above which the economy suddenly re-disperses evenly among the two regions. That is, the asymmetric dispersion undergoes a saddle-node bifurcation (refer to the end of Section 4 for a more detailed discussion) and there exists locational hysteresis as both an asymmetric dispersion equilibrium and the symmetric dispersion equilibrium are simultaneously stable for $\phi\in(\phi_{b2},\phi_l)$. The symmetric equilibrium undergoes a subcritical pitchfork bifurcation at $\phi=\phi_{b2}$.

What is perhaps more striking though is that, as $\phi$ increases further, it encounters an interval $\phi \in (\phi_{s1},\phi_{s2})$ whereby both agglomeration and symmetric dispersion are stable, and a curve of unstable asymmetric equilibria exists in between. The striking feature is that the curve of agglomeration equilibria is not connected to any other kind of equilibrium. However, further increases in $b$ will eventually connect the asymmetric equilibrium curve with the full agglomeration as shown in Figure~\ref{bifdiagramscenariostrange2}. This apparently strange behaviour may be attributed to the limitations imposed by the implicitly adopted \emph{asymptotic} stability as the dynamic stability criterion in this paper. One way to investigate this issue could be to employ \emph{strategic} stability as in \cite{Demichelis2003}, which entails additional necessary conditions for stability which restrict the equilibrium set.\footnote{We thank Anna Rubinchik for this reference on  stability conditions.} The potential game's approach taken recently by \cite{OsawaAkamatsu2020} could also be analysed in this context for equilibrium refinement. Finally, the connection via an increase in $b$ could hint at the existence of co-dimension $2$ bifurcations with both $\phi$ and $b$ employed as bifurcation parameters.\footnote{We thank Sofia B.S.D. Castro for pointing out this potentially relevant issue.} However, it can be shown that choosing $b$ as the \emph{alternative } bifurcation parameter leads to the same  local bifurcations   as $\phi$. Although these are all relevant points, we do not pursue these issues further in our paper.

In Figure~\ref{bifdiagramscenariostrange2}, we have $b=0.15$, and the economy reaches full agglomeration smoothly as $\phi$ increases but then jumps discontinuously to symmetric dispersion. We have the single breakpoint $\phi_{b2}\in (\phi_{s1},\phi_{s2})$, which means that there exists locational hysteresis as for $\phi \in (\phi_{b2}, \phi_{s2})$ both agglomeration and symmetric dispersion are simultaneously stable.

Further increases in $b$ can be shown to lead, first to a situation similar to that of Figure~\ref{bifdiagramscenariov}, and then to ubiquitous agglomeration for any level of $\phi$.

\section{Multiplicative case: a numerical analysis}

We now consider that the interaction between scientists hailing from different regions is of multiplicative nature. For instance, a Cobb-Douglas specification such as:
\begin{equation}
\Phi_{i}^{CD}(s)=\min\left\{ \dfrac{z_{i}^{b}z_{j}^{1-b}}{a_{i}(s)}\gamma A,1\right\},\label{eq:prob of succesful innovation-3}
\end{equation}
yields a multiplicative scenario, and thus brings our setup closer to \cite{Berliant-Fujita-IER2008}. Again we shall impose a parametrization that guarantees that the first term lies in the interval $[0,1]$.

Regarding agglomeration, it cannot be an equilibrium because $\Delta v(1) = \frac{\mu\ln\phi}{\sigma-1}<0$. Since no innovation occurs, expected profits are driven down to zero and the only thing that matters is the cost-of-living, which is positive in the fully agglomerated region. However, if  an agent moves to the ``empty'' region, innovation occurs through inter-regional interaction and he will earn a positive nominal wage, which means that exogenous perturbations always increase the utility of agents. While this may strike as an implausible outcome, one way to counter this would be to specify  a different probability for corner equilibria, potentially following \cite{Berliant-Fujita-IER2008}. 

The \textit{symmetric dispersion} retains the qualitative properties of the benchmark case analyzed in Section 3.2.2, since $\Phi_i^{CD}(s)$ in \eqref{eq:prob of succesful innovation-3} is a particular case of  $\Phi_i^G(s)$ in \eqref{eq:general}. That is, there may  exist two break points $\phi_{b1}$ and $\phi_{b2}$ given by (\ref{eq:break points}) (provided the conditions presented therein hold) and symmetric dispersion is stable for $\phi \in (0,\phi_{b1})\cup (\phi_{b2},1)$,
and unstable for $\phi\in\left(\phi_{b1},\phi_{b2}\right)$.

We can grasp the general qualitative behaviour and properties of the model under $\Phi_{i}^{CD}(s)$ by depicting a gallery of bifurcation diagrams for several different values of $b$. The benchmark parameter values are $(\sigma,\gamma,\lambda)=(8,1,4)$, while the value for $b$ is reported in the caption of each picture.

\begin{figure}
     \centering
     \begin{subfigure}[b]{0.35\linewidth}
         \centering
         \includegraphics[width=\linewidth]{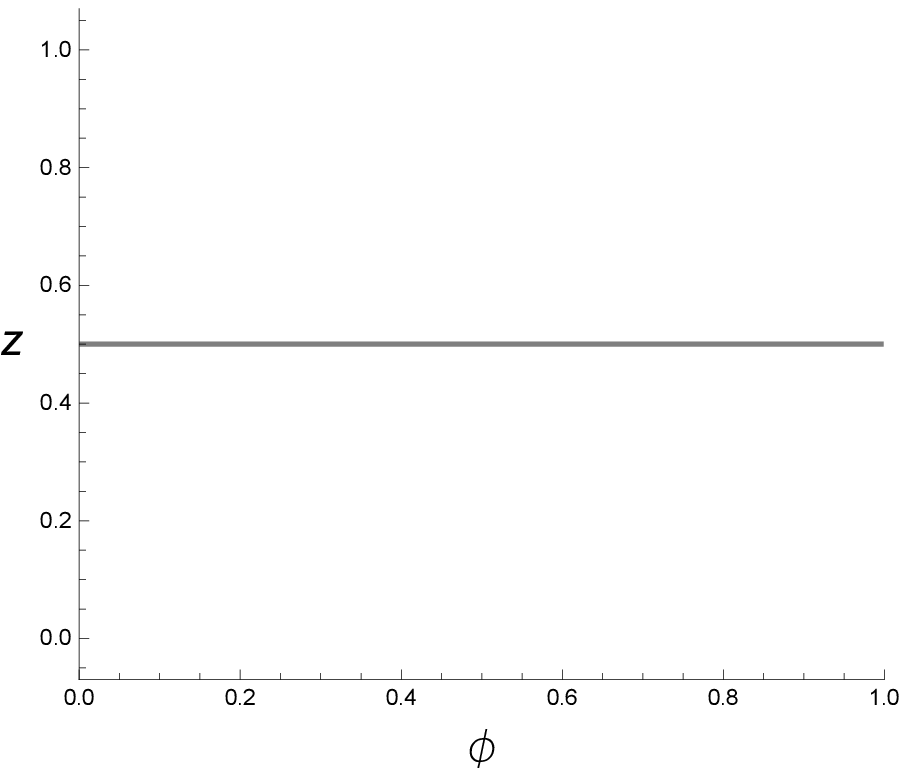}
         \caption{Stable symmetric dispersion: $b=0.1$.}
         \label{fig:bifa}
     \end{subfigure}
     \hfill
     \begin{subfigure}[b]{0.35\linewidth}
         \centering
         \includegraphics[width=\linewidth]{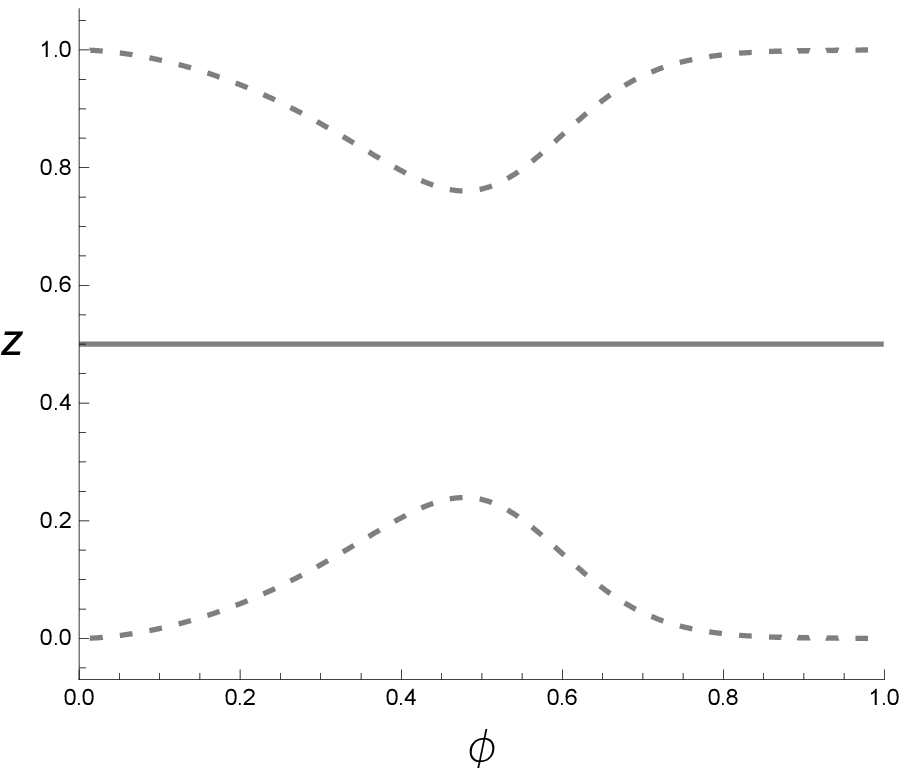}
         \caption{Stable symmetric dispersion: $b=0.44$.}
         \label{fig:bifb}        
     \end{subfigure}
     \par\bigskip
          \begin{subfigure}[b]{0.35\linewidth}
         \centering
         \includegraphics[width=\linewidth]{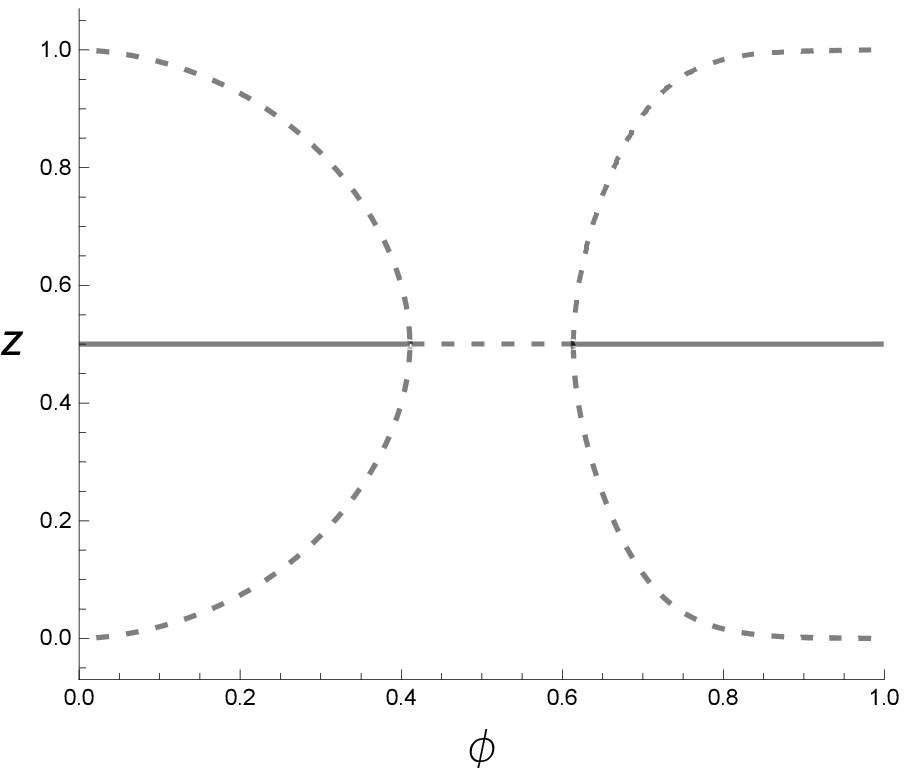}
         \caption{Stable symmetric dispersion for $\phi<\phi_{b1}$ and $\phi>\phi_{b2}$: $b=0.45$.}
         \label{fig:bifc}
     \end{subfigure}
     \hfill
\begin{subfigure}[b]{0.35\linewidth}
         \centering
         \includegraphics[width=\linewidth]{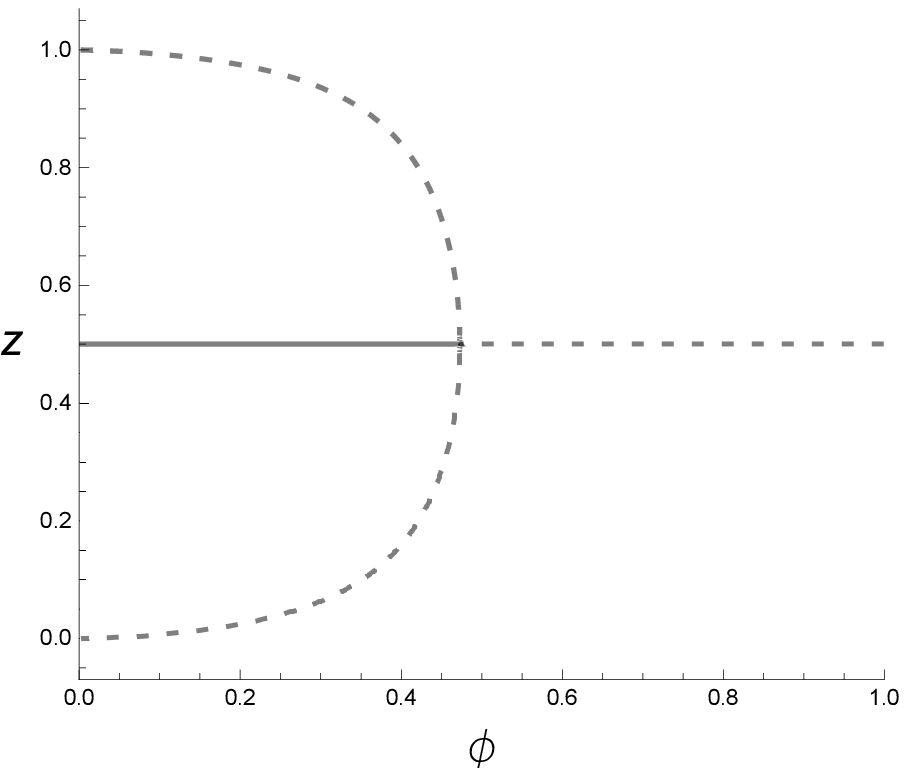}
         \caption{Stable symmetric dispersion for $\phi<\phi_{b1}$: $b=0.5$.}
         \label{fig:bifd}        
     \end{subfigure}
     \par\bigskip
     \begin{subfigure}[b]{0.35\linewidth}
         \centering
         \includegraphics[width=\linewidth]{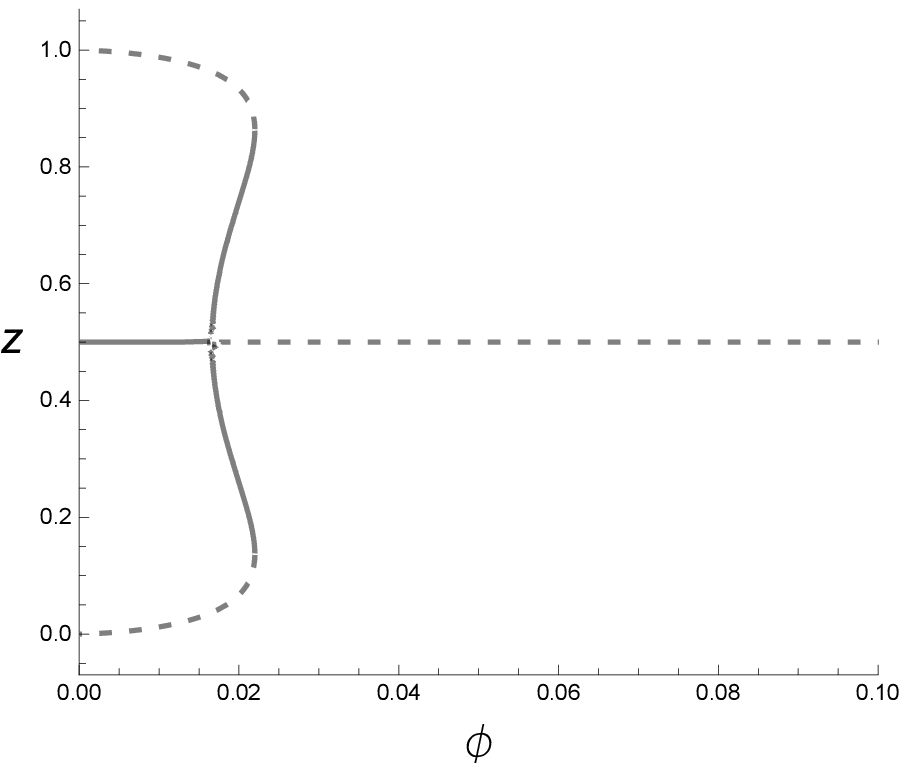}
         \caption{Stable symmetric dispersion for $\phi<\phi_{b1}$ and stable asymmetric dispersion for $\phi>\phi_{b1}$: $b=0.65$.}
         \label{fig:bife}        
     \end{subfigure}
         \hfill
\begin{subfigure}[b]{0.35\linewidth}
         \centering
         \includegraphics[width=\linewidth]{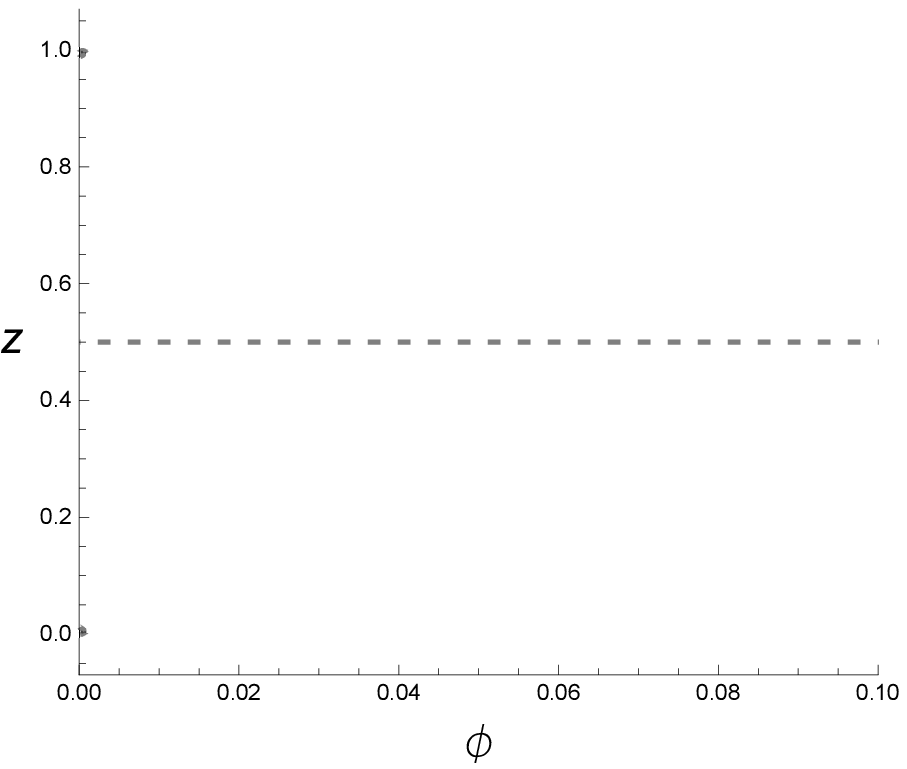}
         {\caption{\\ No stable equilibria: $b=0.75$}.}
         \label{fig:biff}        
     \end{subfigure}
\caption{ Bifurcation diagram for $(\lambda,\gamma,\sigma)=(4,1,8)$. Filled lines correspond to stable equilibria and dashed lines correspond to unstable equilibria.}
\label{fig:Figure12}
\end{figure}
As $b$ increases from a very small value, initially only symmetric dispersion exists and is stable for the entire range of $\phi$ (Figure~12(a)). As $b$ increases, there emerge two curves of unstable asymmetric dispersion equilibria (Figure~12(b)), one with a minimum and the other with a maximum (for the same $\phi$, due to symmetry). For higher levels of $b$, the  extrema collide vertically at symmetric dispersion and two curves of unstable asymmetric dispersion equilibria branch from two break points. Symmetric dispersion is stable below the lowest break point and above the highest break point, and unstable in between (Figure~12(c)). Once related variety becomes high enough ($b\geq 1/2$), re-dispersion is no longer possible, as expected, and we end up with a subcritical pitchfork bifurcation (Figure~12(d)), as in \cite{Fujita-Krugman-Venables-Book1999}, whereby symmetric dispersion is stable for low values of $\phi$ and becomes unstable for high values of $\phi$. The main difference is that there are no stable equilibria above the break point. If $b$ increases even more, we end up with a supercritical pitchfork bifurcation whereby a stable symmetric dispersion loses stability for $\phi$ large enough (Figure~12(e)). This state encounters a primary branch of stable asymmetric equilibria that, apparently, undergoes a secondary saddle-node bifurcation.\footnote{This qualitative scenario is very similar to the one encountered by \citeauthor{Castro_2021} (2021, Fig. 4 in pp.~197) except that the stability of equilibria is ``reversed'' and the bifurcation parameter employed by them is a heterogeneity parameter.} Finally, for a prohibitively high value of $b$, there are no stable equilibria in the model, as shown by Figure~12(f).


\bigskip{}

\clearpage\singlespacing \bibliographystyle{apalike}
\bibliography{refs}

\end{document}